\def\w{{\bf w}}
\def\y{{\bf y}}
\def\x{{\bf x}}
\def\x{{\mathbf x}}
\def\w{{\bf w}}
\def\x{{\bf x}}
\def\y{{\bf y}}
\def\b{{\bf b}}
\def\h{{\bf h}}
\def\be{\begin{equation}}
\def\ee{\end{equation}}
\def\ba{\left[\begin{array}}
\def\ea{\end{array}\right]}
\def\w{{\bf w}}
\def\x{{\bf x}}
\def\y{{\bf y}}
\def\b{{\bf b}}
\def\1{{\bf 1}}
\def\0{{\bf 0}}
\def\hatx{{\hat{\x}}}
\def\Sweak{S_{weak}}
\def\betaweak{\beta_{weak}}
\def\betasec{\beta_{sec}}
\def\Ssec{S_{sec}}
\def\Sstr{S_{str}}
\def\betastr{\beta_{str}}
\newtheorem{theorem}{Theorem}
\begin{document}

\begin{singlespace}

\title {Under-determined linear systems and $\ell_q$-optimization thresholds %A tight variant of Gordon's escape through a mesh theorem
%\footnote{ This work was supported in
%part.}
}
\author{
\textsc{Mihailo Stojnic}
\\
\\
{School of Industrial Engineering}\\
{Purdue University, West Lafayette, IN 47907} \\
{e-mail: {\tt mstojnic@purdue.edu}} }
\date{}
\maketitle

\centerline{{\bf Abstract}} \vspace*{0.1in}

Recent studies of under-determined linear systems of equations with sparse solutions showed a great practical and theoretical efficiency of a particular technique called $\ell_1$-optimization. Seminal works \cite{CRT,DOnoho06CS} rigorously confirmed it for the first time. Namely, \cite{CRT,DOnoho06CS} showed, in a statistical context, that $\ell_1$ technique can recover sparse solutions of under-determined systems even when the sparsity is linearly proportional to the dimension of the system. A followup \cite{DonohoPol} then precisely characterized such a linearity through a geometric approach and a series of work\cite{StojnicCSetam09,StojnicUpper10,StojnicEquiv10} reaffirmed statements of \cite{DonohoPol} through a purely probabilistic approach. A theoretically interesting alternative to $\ell_1$ is a more general version called $\ell_q$ (with an essentially arbitrary $q$). While  $\ell_1$ is typically considered as a first available convex relaxation of sparsity norm $\ell_0$, $\ell_q,0\leq q\leq 1$, albeit non-convex, should technically be a tighter relaxation of $\ell_0$. Even though developing polynomial (or close to be polynomial) algorithms for non-convex problems is still in its initial phases one may wonder what would be the limits of an $\ell_q,0\leq q\leq 1$, relaxation even if at some point one can develop algorithms that could handle its non-convexity. A collection of answers to this and a few realted questions is precisely what we present in this paper. Namely, we look at the $\ell_q$-optimization and how it fares when used for solving under-determined linear systems with sparse solutions. Although our results are designed to be only on an introductory/conceptual level, they already hint that $\ell_q$ can in fact provide a better performance than $\ell_1$ and that designing the algorithms that would be able to handle it in a reasonable (if not polynomial) time is certainly worth further exploration.

\vspace*{0.25in} \noindent {\bf Index Terms: under-determined linear systems; sparse solutions; $\ell_q$-minimization}.

\end{singlespace}

%%%%%%%%%%%%%%%%%%%%%%%%%%%%%%%%%%%%%%%%%%%%%%%%%%%%%%%%%%%%%%%%%
\section{Introduction}
\label{sec:back}
%%%%%%%%%%%%%%%%%%%%%%%%%%%%%%%%%%%%%%%%%%%%%%%%%%%%%%%%%%%%%%%%%

In this paper we look at the under-determined linear systems of equations with sparse solutions. These systems gained a lot of attention recently in first place due to seminal results of \cite{CRT,DOnoho06CS}. In \cite{CRT,DOnoho06CS}, a particular technique called $\ell_1$ optimization was considered and it was shown in a statistical context that such a technique can recover a sparse solution (of sparsity linearly proportional to the system dimension).

To make all of this a bit more precise we start with a mathematical descriptions of linear systems. As is well known a linear system of equations can be written as
\begin{equation}
A\x=\y \label{eq:system}
\end{equation}
where $A$ is an $m\times n$ ($m<n$) system matrix and $\y$ is
an $m\times 1$ vector. Typically one is then given $A$ and $\y$ and the goal is to determine $\x$. However when ($m<n$) the odds are that there will be many solutions and that the system will be under-determined. In fact that is precisely the scenario that we will look at. However, we will slightly restrict our choice of $\y$. Namely, we will assume that $\y$ can be represented as
\begin{equation}
\y=A\tilde{\x}, \label{eq:yrepsystem}
\end{equation}
where we also assume that $\tilde{\x}$ is a $k$-sparse vector (here and in the rest of the paper, under $k$-sparse vector we assume a vector that has at most $k$ nonzero components). This essentially means that we are interested in solving (\ref{eq:system}) assuming that there is a solution that is $k$-sparse. Moreover, we will assume that there is no solution that is less than $k$-sparse, or in other words, a solution that has less than $k$ nonzero components. Such type of problems gained a lot of popularity over the last decade in first place due to their applications in a field called compressed sensing (while the literature on compressed sensing is growing on a daily basis, we here refer to two introductory papers \cite{CRT,DOnoho06CS}).

To make writing in the rest of the paper easier, we will assume the
so-called \emph{linear} regime, i.e. we will assume that $k=\beta n$
and that the number of equations is $m=\alpha n$ where
$\alpha$ and $\beta$ are constants independent of $n$ (more
on the non-linear regime, i.e. on the regime when $m$ is larger than
linearly proportional to $k$ can be found in e.g.
\cite{CoMu05,GiStTrVe06,GiStTrVe07}).

Now, given the above sparsity assumption, one can then rephrase the original problem (\ref{eq:system}) in the following way
\begin{eqnarray}
\mbox{min} & & \|\x\|_{0}\nonumber \\
\mbox{subject to} & & A\x=\y. \label{eq:l0}
\end{eqnarray}
Assuming that $\|\x\|_{0}$ counts how many nonzero components $\x$ has, (\ref{eq:l0}) is essentially looking for the sparsest $\x$ that satisfies (\ref{eq:system}), which, according to our assumptions, is exactly $\tilde{\x}$. Clearly, it would be nice if one can solve in a reasonable (say polynomial) time (\ref{eq:l0}). However, this does not appear to be easy. Instead one typically resorts to its relaxations that would be solvable in polynomial time. The first one that is typically employed is called $\ell_1$-minimization. Since what we will present in this paper will related to this technique we the following subsection provide a brief review of the $\ell_1$.

%%%%%%%%%%%%%%%%%%%%%%%%%%%%%%%%%%%%%%%%%%%%%%%%%%%%%%%%%%%%%%%%%
\subsection{$\ell_1$-minimization}
\label{sec:l1min}
%%%%%%%%%%%%%%%%%%%%%%%%%%%%%%%%%%%%%%%%%%%%%%%%%%%%%%%%%%%%%%%%%

As mentioned above, the first relaxation of (\ref{eq:l0}) that is typically employed is
the following $\ell_1$ minimization
\begin{eqnarray}
\mbox{min} & & \|\x\|_{1}\nonumber \\
\mbox{subject to} & & A\x=\y. \label{eq:l1}
\end{eqnarray}
Clearly, (\ref{eq:l1}) is an optimization problem solvable in polynomial time. Of course the question is how well does it approximate the original problem (\ref{eq:l0}). Well, for certain system dimensions it actually works very well and actually find exactly the same solution as (\ref{eq:l0}). In fact, one of the main reasons why the compressed sensing became popular is actually success of \cite{CRT,DOnoho06CS,DonohoPol} in characterizing when the solutions of (\ref{eq:l0}) and (\ref{eq:l1}) are the same. While there have been a tone of great work on $\ell_1$ we below restrict our attention to reviewing these two lines of work, in our mind, the most influential in this field.

%As mentioned above, a particularly successful technique for solving (\ref{eq:system}) is a linear programming relaxation called $\ell_1$-optimization. (Variations of the standard $\ell_1$-optimization from e.g.
%\cite{CWBreweighted,SChretien08,SaZh08}) as well as those from \cite{SCY08,FL08,GN03,GN04,GN07,DG08} related to $\ell_q$-optimization, $0<q<1$
%are possible as well.) Basic $\ell_1$-optimization algorithm finds $\x$ in
%(\ref{eq:system}) by solving the following $\ell_1$-norm minimization problem
%\begin{eqnarray}
%\mbox{min} & & \|\x\|_{1}\nonumber \\
%\mbox{subject to} & & A\x=\y. \label{eq:l1}
%\end{eqnarray}
%Due to its popularity the literature on the use of the above algorithm is rapidly growing. We below restrict our attention to two, in our mind, the most influential works that relate to (\ref{eq:l1}).

In \cite{CRT} the authors were able to show that if
$\alpha$ and $n$ are given, $A$ is given and satisfies the restricted isometry property (RIP) (more on this property the interested reader can find in e.g. \cite{Crip,CRT,Bar,Ver,ALPTJ09}), then
any unknown vector $\tilde{\x}$ in (\ref{eq:yrepsystem}) with no more than $k=\beta n$ (where $\beta$
is a constant dependent on $\alpha$ and explicitly
calculated in \cite{CRT}) non-zero elements can be recovered by
solving (\ref{eq:l1}).

However, the RIP is only a \emph{sufficient}
condition for $\ell_1$-optimization to recover $\tilde{\x}$. Instead of characterizing $A$ through the RIP
condition, in \cite{DonohoUnsigned,DonohoPol} Donoho looked at its geometric properties/potential. Namely,
in \cite{DonohoUnsigned,DonohoPol} Donoho considered the polytope obtained by
projecting the regular $n$-dimensional cross-polytope $C_p^n$ by $A$. He then established that
the solution of (\ref{eq:l1}) will be the $k$-sparse solution of
(\ref{eq:system}) if and only if
$AC_p^n$ is centrally $k$-neighborly
(for the definitions of neighborliness, details of Donoho's approach, and related results the interested reader can consult now already classic references \cite{DonohoUnsigned,DonohoPol,DonohoSigned,DT}). In a nutshell, using the results
of \cite{PMM,AS,BorockyHenk,Ruben,VS}, it is shown in
\cite{DonohoPol}, that if $A$ is a random $m\times n$
ortho-projector matrix then with overwhelming probability $AC_p^n$ is centrally $k$-neighborly (as usual, under overwhelming probability we in this paper assume
a probability that is no more than a number exponentially decaying in $n$ away from $1$). Miraculously, \cite{DonohoPol,DonohoUnsigned} provided a precise characterization of $m$ and $k$ (in a large dimensional context) for which this happens.

It should be noted that one usually considers success of
(\ref{eq:l1}) in recovering \emph{any} given $k$-sparse $\x$ in (\ref{eq:system}). It is also of interest to consider success of
(\ref{eq:l1}) in recovering
\emph{almost any} given $\x$ in (\ref{eq:system}). We below make a distinction between these
cases and recall on some of the definitions from
\cite{DonohoPol,DT,DTciss,DTjams2010,StojnicCSetam09,StojnicICASSP09}.

Clearly, for any given constant $\alpha\leq 1$ there is a maximum
allowable value of $\beta$ such that for \emph{any} given $k$-sparse $\x$ in (\ref{eq:system}) the solution of (\ref{eq:l1})
is with overwhelming probability exactly that given $k$-sparse $\x$. One can then (as is typically done) refer to this maximum allowable value of
$\beta$ as the \emph{strong threshold} (see
\cite{DonohoPol}) and denote it as $\beta_{str}$. Similarly, for any given constant
$\alpha\leq 1$ and \emph{any} given $\x$ with a given fixed location of non-zero components and a given fixed combination of its elements signs
there will be a maximum allowable value of $\beta$ such that
(\ref{eq:l1}) finds that given $\x$ in (\ref{eq:system}) with overwhelming
probability. One can refer to this maximum allowable value of
$\beta$ as the \emph{weak threshold} and denote it by $\beta_{w}$ (see, e.g. \cite{StojnicICASSP09,StojnicCSetam09}). One can also go a step further and consider scenario where for any given constant
$\alpha\leq 1$ and \emph{any} given $\x$ with a given fixed location of non-zero components
there will be a maximum allowable value of $\beta$ such that
(\ref{eq:l1}) finds that given $\x$ in (\ref{eq:system}) with overwhelming
probability. One can then refer to such a $\beta$ as the \emph{sectional threshold} and denote it by $\beta_{sec}$ (more on the definition of the sectional threshold the interested reader can find in e.g. \cite{DonohoPol,StojnicCSetam09}).

When viewed within this frame the results of \cite{CRT,DOnoho06CS} established that $\ell_1$-minimization achieves recovery through a linear scaling of all important dimensions ($k$, $m$, and $n$). Moreover, for all $\beta$'s defined above lower bounds were provided in \cite{CRT}. On the other hand, the results of \cite{DonohoPol,DonohoUnsigned} established the exact values of $\beta_w$ and provided lower bounds on $\beta_{str}$ and $\beta_{sec}$.

In a series of our own work (see, e.g. \cite{StojnicICASSP09,StojnicCSetam09,StojnicUpper10}) we then created an alternative probabilistic approach which was capable of providing the precise characterization of $\beta_w$ as well and thereby reestablishing the results of Donoho \cite{DonohoPol} through a purely probabilistic approach. We also presented in \cite{StojnicCSetam09} further results related to lower bounds on $\beta_{str}$ and $\beta_{sec}$.

Of course, there are many other algorithms that can be used to attack (\ref{eq:l0}). Among them are also numerous variations of the standard $\ell_1$-optimization from e.g. \cite{CWBreweighted,SChretien08,SaZh08,StojnicICASSP10knownsupp} as well as many other conceptually completely different ones from e.g. \cite{JATGomp,JAT,NeVe07,DTDSomp,NT08,DaiMil08,DonMalMon09}. While all of them are fairly successful in their own way and with respect to various types of performance measure, one of them, namely the so called AMP from \cite{DonMalMon09}, is of particular interest when it comes to $\ell_1$. What is fascinating about AMP is that it is a fairly fast algorithm (it does require a bit of tuning though) and it has provably the same statistical performance as (\ref{eq:l1}) (for more details on this see, e.g. \cite{DonMalMon09,BayMon10}). Since our main goal in this paper is to a large degree related to $\ell_1$ we stop short of reviewing further various alternatives to (\ref{eq:l1}) and instead refer to any of the above mentioned papers as well as our own \cite{StojnicCSetam09,StojnicUpper10} where these alternatives were revisited in a bit more detail.

Below, we instead switch to a further modification of $\ell_1$ called $\ell_q$ that will be the main subject of this paper.

\subsection{$\ell_q$-minimization}
\label{sec:lqmin}
%%%%%%%%%%%%%%%%%%%%%%%%%%%%%%%%%%%%%%%%%%%%%%%%%%%%%%%%%%%%%%%%%

As mentioned above, the first relaxation of (\ref{eq:l0}) that is typically employed is
the $\ell_1$ minimization from (\ref{eq:l1}). The reason for that is that it is the first of the norm relaxations that results in an optimization problem that is solvable in polynomial time. One can alternatively look at the following (tighter) relaxation (considered in e.g. \cite{GN03,GN04,GN07,FL08})
\begin{eqnarray}
\mbox{min} & & \|\x\|_{q}\nonumber \\
\mbox{subject to} & & A\x=\y. \label{eq:lq}
\end{eqnarray}
We will for concreteness assume $q\in[0,1]$; however, we do mention that when it comes to our own results that we will present below there is really no need for such a restriction, i.e. our results can easily be adapted to work for a wider range of $q$. Clearly, (\ref{eq:lq}) is an optimization problem which is not known to be solvable in polynomial time. Moreover, developing fast algorithms to solve it is a fairly attractive area of research. Since our goal will be recovering abilities of (\ref{eq:lq}) rather than how it can be solved we don't analyze in further details practical algorithmic aspects of (\ref{eq:lq}). In other words, we will assume that (\ref{eq:lq}) somehow can be solved and then we will look at scenarios when such a solution matches $\tilde{\x}$. In a way our analysis will provide some answers to question: if one can solve (\ref{eq:lq}) in a reasonable (if not polynomial) amount of time how likely is that its solution will be $\tilde{\x}$.

Of course, this is the same type of question we considered when discussing performance of (\ref{eq:l1}) above and obviously the same type of question attacked in \cite{CRT,DOnoho06CS,DonohoPol,StojnicCSetam09,StojnicUpper10}. To be a bit more specific, one can then ask for what system dimensions (\ref{eq:lq}) actually works well and finds exactly the same solution as (\ref{eq:l0}), i.e. $\tilde{\x}$. A typical way to attack such a question would be to translate the results that relate to $\ell_1$ to general $\ell_q$ case. In fact that is exactly what has been done for many techniques, including obviously the RIP one developed in \cite{CRT}. In this paper, we will attempt to translate our own results from \cite{StojnicCSetam09}. To that end, we will present results that relate to the sectional, strong, and weak thresholds of $\ell_q$ minimization. The definitions of these thresholds will follow the above introduced definitions for $\ell_1$-thresholds with a very few minor modifications. We will introduce them throughout the paper as we need them.

We organize the rest of the paper in the following way. In Section
\ref{sec:secthr} we present the core of the mechanism and how it can be used to obtain the sectional thresholds for $\ell_q$ minimization. In Section \ref{sec:strthr} we will then present a neat modification of the mechanism so that it can handle the strong thresholds as well. In Section \ref{sec:weakthr} we present the weak thresholds results. In Section \ref{sec:conc} we discuss obtained results and provide several conclusions related to their importance.

%%%%%%%%%%%%%%%%%%%%%%%%%%%%%%%%%%%%%%%%%%%%%%%%%%%%%%%%%%%%%%%%%
\section{$\ell_q$-minimization sectional threshold}
\label{sec:secthr}
%%%%%%%%%%%%%%%%%%%%%%%%%%%%%%%%%%%%%%%%%%%%%%%%%%%%%%%%%%%%%%%%%

In this section we start assessing the performance of $\ell_q$ minimization by looking at its sectional thresholds. Before proceeding further we slightly readjust the definition of the $\ell_1$ sectional thresholds given above so that it fits the $\ell_q$ case considered here. Namely, one considers a scenario where for any given constant
$\alpha\leq 1$ and \emph{any} $\tilde{\x}$ in \ref{eq:yrepsystem} with a given fixed location of non-zero components
there will be a maximum allowable value of $\beta$ such that the solution of
(\ref{eq:lq}) is that given $\tilde{\x}$ with overwhelming
probability. We will refer to such a $\beta$ as the \emph{sectional threshold} and will denote it by $\beta_{sec}^{(q)}$ (we again recall that more on the definition of the sectional threshold the interested reader can find in e.g. \cite{DonohoPol,StojnicCSetam09}).

%%%%%%%%%%%%%%%%%%%%%%%%%%%%%%%%%%%%%%%%%%%%%%%%%%%%%%%%%%%%%%%%%
\subsection{Sectional threshold preliminaries}
\label{sec:secthrprelim}
%%%%%%%%%%%%%%%%%%%%%%%%%%%%%%%%%%%%%%%%%%%%%%%%%%%%%%%%%%%%%%%%%

Below we will provide a way to quantify behavior of $\beta_{sec}^{(q)}$. In doing so we will rely on some of the mechanisms presented in \cite{StojnicCSetam09}.
and along the same lines will assume a substantial level of familiarity with many of the well-known results that relate to the performance characterization of (\ref{eq:l1}) (we will fairly often recall on many results/definitions that we established in \cite{StojnicCSetam09}). We start by introducing a nice way of characterizing sectional success/failure of (\ref{eq:lq}).

\begin{theorem}(Nonzero part of $\x$ has fixed location)
Assume that an $m\times n$ matrix $A$ is given. Let $\tilde{X}_{sec}$ be the collection of all $k$-sparse vectors $\tilde{\x}$ in $R^n$ for which $\tilde{\x}_1=\tilde{\x}_2=\dots=\tilde{\x}_{n-k}=0$. Let $\tilde{\x}^{(i)}$ be any $k$-sparse vector from $\tilde{X}_{sec}$. Further, assume that $\y^{(i)}=A\tilde{\x}^{(i)}$ and that $\w$ is
an $n\times 1$ vector. If
\begin{equation}
(\forall \w\in \textbf{R}^n | A\w=0) \quad  \sum_{i=n-k+1}^n |\w_i|^q<\sum_{i=1}^{n-k}|\w_{i}|^q
\label{eq:thmeqgensec1}
\end{equation}
then the solution of (\ref{eq:lq}) for every pair $(\y^{(i)},A)$ is the corresponding $\tilde{\x}^{(i)}$.
%Moreover, if
%\begin{equation}
%(\exists \w\in \textbf{R}^n | A\w=0) \quad  \sum_{i=n-k+1}^n |\w_i|>\sum_{i=1}^{n-k}|\w_{i}|
%\label{eq:thmeqgensec2}
%\end{equation}
%then there will be a $k$-sparse $\x$ that satisfies (\ref{eq:system}) and is not the solution of (\ref{eq:l1}).
\label{thm:thmgensec}
\end{theorem}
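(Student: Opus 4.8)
The plan is to prove the contrapositive-style equivalence that is standard for null-space characterizations of sparse recovery, adapted here to the (non-convex) $\ell_q$ objective. Fix any $\tilde{\x}^{(i)}\in\tilde{X}_{sec}$, so its support is contained in the last $k$ coordinates $\{n-k+1,\dots,n\}$, and set $\y^{(i)}=A\tilde{\x}^{(i)}$. Let $\hatx$ be any solution of (\ref{eq:lq}) for the pair $(\y^{(i)},A)$. Since $\tilde{\x}^{(i)}$ is itself feasible, optimality gives $\|\hatx\|_q^q\le\|\tilde{\x}^{(i)}\|_q^q$ (working with the $q$-th powers is harmless and convenient). Writing $\w=\hatx-\tilde{\x}^{(i)}$, feasibility of both points forces $A\w=A\hatx-A\tilde{\x}^{(i)}=\y^{(i)}-\y^{(i)}=0$, so $\w$ lies in the null space of $A$, which is exactly the set quantified over in (\ref{eq:thmeqgensec1}). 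The goal is then to show that under hypothesis (\ref{eq:thmeqgensec1}) we must have $\w=0$, i.e. $\hatx=\tilde{\x}^{(i)}$.

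The key step is a splitting argument on the coordinates. Let $\bar I=\{1,\dots,n-k\}$ (the guaranteed-zero block of $\tilde{\x}^{(i)}$) and $I=\{n-k+1,\dots,n\}$ (the block containing its support). On $\bar I$ we have $\tilde{\x}^{(i)}_j=0$, hence $\hatx_j=\w_j$ there, so $\sum_{j\in\bar I}|\hatx_j|^q=\sum_{j\in\bar I}|\w_j|^q$. On $I$ I would like a lower bound on $\sum_{j\in I}|\hatx_j|^q$ in terms of $\|\tilde{\x}^{(i)}\|_q^q$ and $\sum_{j\in I}|\w_j|^q$. Here the relevant elementary inequality is the $q$-triangle (sub-additivity) inequality: for $0\le q\le 1$ and scalars $a,b$, $|a+b|^q\le|a|^q+|b|^q$, equivalently $|a+b|^q\ge|b|^q-|a|^q$ applied with $a=\w_j$, $b=\tilde{\x}^{(i)}_j$. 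Summing over $j\in I$ gives $\sum_{j\in I}|\hatx_j|^q\ge\|\tilde{\x}^{(i)}\|_q^q-\sum_{j\in I}|\w_j|^q$. Combining the two blocks,
\[
\|\hatx\|_q^q \;\ge\; \|\tilde{\x}^{(i)}\|_q^q \;-\;\sum_{j\in I}|\w_j|^q \;+\;\sum_{j\in\bar I}|\w_j|^q .
\]
Chaining this with the optimality inequality $\|\hatx\|_q^q\le\|\tilde{\x}^{(i)}\|_q^q$ yields $\sum_{j\in\bar I}|\w_j|^q\le\sum_{j\in I}|\w_j|^q$, i.e. $\sum_{i=n-k+1}^n|\w_i|^q\ge\sum_{i=1}^{n-k}|\w_i|^q$. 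But hypothesis (\ref{eq:thmeqgensec1}) asserts the strict reverse inequality for every nonzero $\w$ in the null space of $A$; the only way to avoid a contradiction is $\w=0$. Hence $\hatx=\tilde{\x}^{(i)}$, and since the argument used nothing about $\tilde{\x}^{(i)}$ beyond membership in $\tilde{X}_{sec}$, it applies simultaneously to every pair $(\y^{(i)},A)$, which is the claim.

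The only place that needs genuine care — and the step I expect to be the main (mild) obstacle — is the handling of the $q$-th powers and, in particular, the edge cases $q=0$ and $q=1$. For $q=1$ the sub-additivity inequality is just the ordinary triangle inequality and everything goes through verbatim. For $q\in(0,1)$ one should note that $t\mapsto t^q$ is concave on $[0,\infty)$ with $0^q=0$, from which $|a+b|^q\le|a|^q+|b|^q$ follows. For $q=0$ one interprets $|t|^0$ as the indicator $\mathbf 1_{\{t\ne0\}}$, so $|a+b|^0\le|a|^0+|b|^0$ still holds (the left side is $0$ only when $a+b=0$), and the whole chain reduces to the familiar $\ell_0$ null-space condition; strictly, one may phrase (\ref{eq:thmeqgensec1}) as being about $\mathrm{supp}$ counts in that case. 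I would also remark that strictness of the hypothesis is exactly what promotes ``$\hatx$ is \emph{a} solution'' to ``$\tilde{\x}^{(i)}$ is \emph{the unique} solution''; with only a non-strict inequality in (\ref{eq:thmeqgensec1}) one could guarantee $\|\hatx\|_q^q=\|\tilde{\x}^{(i)}\|_q^q$ but not $\hatx=\tilde{\x}^{(i)}$. No probabilistic input is used at this stage — this theorem is purely deterministic, the randomness of $A$ enters only later when one estimates for which $\beta$ the null-space event (\ref{eq:thmeqgensec1}) holds with overwhelming probability.
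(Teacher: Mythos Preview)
Your proof is correct and follows essentially the same approach as the paper's own proof: write $\hatx=\tilde{\x}^{(i)}+\w$ with $A\w=0$, use optimality of $\hatx$ to get $\sum_i|\tilde{\x}^{(i)}_i+\w_i|^q\le\sum_i|\tilde{\x}^{(i)}_i|^q$, split the sum into the two coordinate blocks, and apply the $q$-subadditivity $|a+b|^q\le|a|^q+|b|^q$ on the support block to deduce $\sum_{i=1}^{n-k}|\w_i|^q\le\sum_{i=n-k+1}^n|\w_i|^q$, contradicting (\ref{eq:thmeqgensec1}) unless $\w=0$. The paper's version is terser---it jumps directly from (\ref{eq:absval}) to (\ref{eq:wcon}) without spelling out the subadditivity step---while you make that step explicit and also discuss the $q=0,1$ endpoints and the role of strictness; these are welcome clarifications but not a different argument.
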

\begin{proof}
The proof follows directly from the corresponding results for $\ell_1$ (see, e.g. Theorem $2$ in \cite{StojnicICASSP09} and references therein). For the completeness we just sketch the argument again. Let $\hatx$ be the solution of (\ref{eq:lq}). We want to show that if (\ref{eq:thmeqgensec1}) holds then $\hatx=\tilde{\x}$. To that end assume opposite, i.e. assume that (\ref{eq:thmeqgensec1}) holds but $\hatx\neq\tilde{\x}$. Then since $\y=A\hatx$ and $\y=A\tilde{\x}$ one must have $\hatx =\tilde{\x}+\w$ with $\w$ such that $A\w=0$. Also, since $\hatx$ is the solution of (\ref{eq:lq}) one has that
\begin{equation}
\sum_{i=1}^n|\x_i+\w_i|^q\leq \sum_{i=1}^{n}|\x_i|^q.\label{eq:absval}
\end{equation}
Then the following must hold as well
\begin{equation}
\sum_{i=1}^{n-k} |\w_i|^q-\sum_{i=n-k+1}^n |\w_i|^q\leq 0.\label{eq:wcon}
\end{equation}
or equivalently
\begin{equation}
\sum_{i=1}^{n-k} |\w_i|^q\leq\sum_{i=n-k+1}^n |\w_i|^q.\label{eq:wcon1}
\end{equation}
Clearly, (\ref{eq:wcon1}) contradicts (\ref{eq:thmeqgensec1}) and $\hatx\neq\tilde{\x}$ can not hold. Therefore $\hatx=\tilde{\x}$ which is exactly what the theorem claims.
\end{proof}

\noindent \textbf{Remark:} The above proof is not our own. If nothing else it directly follows the strategy that would be applied for $q=1$, i.e. $\ell_1$ which had been detailed in many places, see e.g. \cite{DH01,FN,LN,Y,XHapp,SPH,DTbern}. Moreover, such a strategy has already been applied to this very same case of general $q$ as well, see e.g. \cite{GN03,GN04,GN07,FL08}. As we just mentioned, the above proof is not our own and we presented its a sketch just for the completeness. Also, although we did not emphasize it in the above theorem, we mention here that the condition given in the theorem is not only sufficient to characterize sectional equivalence of (\ref{eq:l0}) and (\ref{eq:lq}) but it is also necessary.

We then, following the methodology of \cite{StojnicCSetam09},
start by defining a set $\Ssec$
\begin{equation}
\Ssec=\{\w\in S^{n-1}| \quad \sum_{i=n-k+1}^n |\w_i|^q\geq \sum_{i=1}^{n-k}|\w_{i}|^q\},\label{eq:defSsec}
\end{equation}
where $S^{n-1}$ is the unit sphere in $R^n$. The methodology of \cite{StojnicCSetam09} then invokes the following classic result of Gordon (the version below is a slightly modified version of Gordon's original formulation).
\begin{theorem}(\cite{Gordon88} Escape through a mesh)
\label{thm:Gordonmesh} Let $S$ be a subset of the unit Euclidean
sphere $S^{n-1}$ in $R^{n}$. Let $Y$ be a random
$(n-m)$-dimensional subspace of $R^{n}$, spanned by $(n-m)$ vectors from $R^n$ with i.i.d. standard normal components. Let
\begin{equation}
w_D(S)=E\sup_{\w\in S} (\h^T\w) \label{eq:widthdef}
\end{equation}
where $\h$ is a random column vector in $R^{n}$ with i.i.d. standard normal components. Assume that
$w_D(S)<\left ( \sqrt{m}-\frac{1}{4\sqrt{m}}\right )$. Then
\begin{equation}
P(Y\cap S=0)>1-3.5e^{-\frac{\left (
\sqrt{m}-\frac{1}{4\sqrt{m}}-w_D(S) \right ) ^2}{18}}.
\label{eq:thmesh}
\end{equation}
\end{theorem}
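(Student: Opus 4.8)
The plan is the classical one: reduce the geometric escape event to a lower bound on a Gaussian min--max quantity, obtain that bound via Gordon's Gaussian comparison inequality, and close with Gaussian concentration. For the reduction, note first that by rotational invariance the random $(n-m)$-dimensional subspace $Y$ is uniformly distributed on the Grassmannian of $(n-m)$-planes in $R^n$, and so is $\ker G$ for an $m\times n$ matrix $G$ with i.i.d.\ standard normal entries (since $\ker G$ is $O(n)$-invariant and, for $m\le n$, almost surely $(n-m)$-dimensional); hence we may take $Y=\ker G$. Since $0\notin S^{n-1}$, the event $\{Y\cap S=0\}$ coincides with $\{G\w\neq0\ \text{for all }\w\in S\}=\{\xi>0\}$, where
$$\xi\ :=\ \min_{\w\in S}\|G\w\|_2\ =\ \min_{\w\in S}\ \max_{\u\in S^{m-1}}\ \u^T G\w .$$
(We may assume $S$ compact, the general case following by approximation.)

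The heart of the argument is a lower bound on $E\xi$ by comparison with a decoupled field. Introduce the two centered Gaussian processes indexed by $(\w,\u)\in S\times S^{m-1}$,
$$X_{\w,\u}=\u^T G\w,\qquad\qquad Y_{\w,\u}=\g^T\u+\h^T\w,$$
where $\g\in R^m$ and $\h\in R^n$ are independent with i.i.d.\ standard normal components. For fixed $\w$ one has $E(X_{\w,\u}-X_{\w,\u'})^2=\|\u-\u'\|_2^2=E(Y_{\w,\u}-Y_{\w,\u'})^2$, while for $\w\neq\w'$ a direct computation gives $E(X_{\w,\u}-X_{\w',\u'})^2=2-2(\u^T\u')(\w^T\w')$ and $E(Y_{\w,\u}-Y_{\w',\u'})^2=4-2\u^T\u'-2\w^T\w'$; thus the inequality $E(X_{\w,\u}-X_{\w',\u'})^2\le E(Y_{\w,\u}-Y_{\w',\u'})^2$ required by Gordon's comparison inequality is precisely $(1-\u^T\u')(1-\w^T\w')\ge0$, which always holds on the unit spheres. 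Applying Gordon's inequality (first on finite $\varepsilon$-nets of $S$ and $S^{m-1}$, then passing to the limit along increasingly fine nets) gives
$$E\xi\ \ge\ E\min_{\w\in S}\ \max_{\u\in S^{m-1}}\bigl(\g^T\u+\h^T\w\bigr)\ =\ E\|\g\|_2-w_D(S),$$
because the inner maximum over $\u\in S^{m-1}$ is $\|\g\|_2+\h^T\w$ and $E\min_{\w\in S}\h^T\w=-E\sup_{\w\in S}\h^T\w=-w_D(S)$ by the symmetry $\h\stackrel{d}{=}-\h$. Combined with the elementary estimate $E\|\g\|_2\ge\sqrt m-\tfrac1{4\sqrt m}$, this gives $E\xi\ge\gamma:=\sqrt m-\tfrac1{4\sqrt m}-w_D(S)$, which is strictly positive by hypothesis.

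To finish, observe that for each fixed $\w\in S^{n-1}$ the map $G\mapsto\|G\w\|_2$ is $1$-Lipschitz for the Frobenius norm, hence so is $G\mapsto\xi$; the Gaussian concentration inequality then yields $P(\xi\le E\xi-t)\le e^{-t^2/2}$ for every $t>0$. Taking $t=E\xi\ge\gamma$ gives $P(\xi\le0)\le e^{-\gamma^2/2}$, so
$$P(Y\cap S=0)=P(\xi>0)\ \ge\ 1-e^{-\gamma^2/2}\ \ge\ 1-3.5\,e^{-\gamma^2/18},$$
the last step being the trivial numerical inequality $e^{-\gamma^2/2}\le3.5\,e^{-\gamma^2/18}$. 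This proves the statement --- indeed with the stronger bound $P(Y\cap S=0)\ge1-e^{-\gamma^2/2}$; the specific constants $3.5$ and $18$ in the theorem are artifacts of Gordon's original, unoptimized bookkeeping.

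The main obstacle is the comparison step. Gordon's Gaussian min--max inequality --- the min--max refinement of the Slepian/Sudakov--Fernique comparison theorems, proved by Gaussian interpolation (integration by parts along a path of covariances) --- is the only genuinely nontrivial ingredient, and two points need care: orienting it correctly, so that it reduces to Sudakov--Fernique when $S$ is a single point, and justifying the passage from finite index sets to the continuum via $\varepsilon$-nets, using continuity of the Gaussian sample paths and integrability of the relevant suprema. Everything else --- the kernel realization, the covariance computations, the lower bound on $E\|\g\|_2$, and Gaussian concentration --- is routine.
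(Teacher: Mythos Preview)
The paper does not prove this statement; it quotes it as a classical result of Gordon (the citation \cite{Gordon88}) and adds only a remark on the constants. Your proposal is therefore not being compared against a proof in the paper but rather against Gordon's original argument, and it is essentially that argument: realize $Y$ as the kernel of a Gaussian matrix, lower bound $E\min_{\w\in S}\|G\w\|_2$ via Gordon's min--max comparison inequality against the decoupled process $\g^T\u+\h^T\w$, identify the decoupled min--max as $E\|\g\|_2-w_D(S)$, and finish with Gaussian Lipschitz concentration. Your covariance computations and the orientation of the comparison are correct, and your observation that the route through concentration gives the sharper bound $1-e^{-\gamma^2/2}$ (whence the constants $3.5$ and $18$ are slack) is accurate and matches the paper's own remark that the constant $3.5$ has since been reduced.

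Two small points of care, neither fatal. First, the inequality $E\|\g\|_2\ge\sqrt m-\tfrac1{4\sqrt m}$ is the leading-order asymptotic of the chi mean and is what the paper's formulation implicitly uses; it is true but not quite ``elementary'' --- you should be prepared to justify it (e.g.\ via known bounds on $\Gamma((m{+}1)/2)/\Gamma(m/2)$) rather than assert it. Second, your reduction to compact $S$ ``by approximation'' is the right idea but deserves a line: for non-closed $S$ the event $\{Y\cap S=\emptyset\}$ contains $\{Y\cap\overline S=\emptyset\}$, and $w_D(S)=w_D(\overline S)$, so one may pass to the closure and then use compactness of $\overline S\subset S^{n-1}$.
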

\textbf{Remark}: Gordon's original constant $3.5$ was substituted by
$2.5$ in \cite{RVmesh}. Both constants are not subject of our detailed considerations. However, we do mention in passing that to the best of our knowledge it is an open problem to determine the exact value of this constant as well as to improve and ultimately determine the exact value as well of somewhat high constant $18$.

The methodology of \cite{StojnicCSetam09} then proceeds by characterizing
\begin{equation}
w_D(\Ssec)=E\max_{\w\in \Ssec} (\h^T\w),\label{eq:negham1}
\end{equation}
where to facilitate the exposition we replace $\sup$ with a $\max$. Below we present a way to create an upper-bound on $w_D(\Ssec)$. Equalling such an upper bound with $\sqrt{m}$ would be roughly enough to provide a characterization of the sectional thresholds.

%%%%%%%%%%%%%%%%%%%%%%%%%%%%%%%%%%%%%%%%%%%%%%%%%%%%%%%%%%%%%%%%%
\subsection{Sectional threshold computation}
\label{sec:secthrcomp}
%%%%%%%%%%%%%%%%%%%%%%%%%%%%%%%%%%%%%%%%%%%%%%%%%%%%%%%%%%%%%%%%%

Let $f(\w)=\h^T\w$ and
we start with the following line of identities
\begin{multline}
\hspace{-.5in}\max_{\w\in\Ssec}f(\w)=-\min_{\w\in\Ssec} -\h^T\w=-\min_{\w}\max_{\gamma_{sec}\geq 0,\nu_{sec}\geq 0} -\h^T\w
-\nu_{sec}\sum_{i=n-k+1}^{n}|\w_i|^q
+\nu_{sec}\sum_{i=1}^{n-k}|\w_i|^q+\gamma_{sec}\sum_{i=1}^{n}\w_i^2-\gamma_{sec}\\
\leq -\max_{\gamma_{sec}\geq 0,\nu_{sec}\geq 0}\min_{\w} -\h^T\w
-\nu_{sec}\sum_{i=n-k+1}^{n}|\w_i|^q
+\nu_{sec}\sum_{i=1}^{n-k}|\w_i|^q+\gamma_{sec}\sum_{i=1}^{n}\w_i^2-\gamma_{sec}\\
=-\max_{\gamma_{sec}\geq 0,\nu_{sec}\geq 0}\min_{\w} -\sum_{i=n-k+1}^{n}(|\h_i||\w_i|+\nu_{sec}|\w_i|^q)
+\sum_{i=1}^{n-k}(-|\h_i||\w_i|+\nu_{sec}|\w_i|^q)+\gamma_{sec}\sum_{i=1}^{n}\w_i^2-\gamma_{sec}\\
=\min_{\gamma_{sec}\geq 0,\nu_{sec}\geq 0}\max_{\w} \sum_{i=n-k+1}^{n}(|\h_i||\w_i|+\nu_{sec}|\w_i|^q)
+\sum_{i=1}^{n-k}(|\h_i||\w_i|-\nu_{sec}|\w_i|^q)-\gamma_{sec}\sum_{i=1}^{n}\w_i^2+\gamma_{sec}\\
=\min_{\gamma_{sec}\geq 0,\nu_{sec}\geq 0} f_1(q,\h,\nu_{sec},\gamma_{sec},\beta)+\gamma_{sec},\label{eq:seceq1}
%=\max_{\gamma_{sec}\geq 0}(-\frac{\|\g\|_2^2}{4\gamma_{sph}}-\gamma_{sph})
\end{multline}
where
\begin{equation}
f_1(q,\h,\nu_{sec},\gamma_{sec},\beta)=\max_{\w}\left (\sum_{i=n-k+1}^{n}(|\h_i||\w_i|+\nu_{sec}|\w_i|^q-\gamma_{sec}\w_i^2)
+\sum_{i=1}^{n-k}(|\h_i||\w_i|-\nu_{sec}|\w_i|^q-\gamma_{sec}\w_i^2)\right ).\label{eq:deff1}
\end{equation}
One then has
\begin{multline}
w_D(\Ssec)=E\max_{\w\in\Ssec}\h^T\w=E\max_{\w\in\Ssec}f(\w)=
E\min_{\gamma_{sec}\geq 0,\nu_{sec}\geq 0} f_1(q,\h,\nu_{sec},\gamma_{sec},\beta)+\gamma_{sec}\\
\leq \min_{\gamma_{sec}\geq 0,\nu_{sec}\geq 0} E f_1(q,\h,\nu_{sec},\gamma_{sec},\beta)+\gamma_{sec}.\label{eq:wdineq}
\end{multline}
Now if one sets $\w_{i}=\frac{\w_{i}^{(s)}}{\sqrt{n}}$, $\gamma_{sec}=\gamma_{sec}^{(s)}\sqrt{n}$, and $\nu_{sec}=\nu_{sec}^{(s)}\sqrt{n}^{q-1}$ (where $\w_{i}^{(s)}$, $\gamma_{sec}^{(s)}$, and $\nu_{sec}^{(s)}$ are independent of $n$) then (\ref{eq:wdineq}) gives
\begin{multline}
\lim_{n\rightarrow\infty}\frac{w_D(\Ssec)}{\sqrt{n}}=\lim_{n\rightarrow\infty}\frac{E\max_{\w\in\Ssec}\h^T\w}{\sqrt{n}}
=\lim_{n\rightarrow\infty}\frac{E\max_{\w\in\Ssec}f(\w)}{\sqrt{n}}\\
\hspace{-.3in}=\lim_{n\rightarrow\infty}\frac{E\min_{\gamma_{sec}\geq 0,\nu_{sec}\geq 0} (f_1(q,\h,\nu_{sec},\gamma_{sec},\beta)+\gamma_{sec})}{\sqrt{n}}
\leq \lim_{n\rightarrow\infty}\frac{\min_{\gamma_{sec}\geq 0,\nu_{sec}\geq 0} (E f_1(q,\h,\nu_{sec},\gamma_{sec},\beta)+\gamma_{sec})}{\sqrt{n}}\\
\hspace{-.5in}=\min_{\gamma_{sec}^{(s)}\geq 0,\nu_{sec}^{(s)}\geq 0}  ((\beta E\max_{\w_i^{(s)}}(|\h_i||\w_i^{(s)}|+\nu_{sec}^{(s)}|\w_i^{(s)}|^q-\gamma_{sec}^{(s)}(\w_i^{(s)})^2)\\
+(1-\beta)E\max_{\w_j^{(s)}}(|\h_j||\w_j^{(s)}|-\nu_{sec}^{(s)}|\w_j^{(s)}|^q-\gamma_{sec}^{(s)}(\w_j^{(s)})^2) )+\gamma_{sec}^{(s)} )
=\min_{\gamma_{sec}^{(s)}\geq 0,\nu_{sec}^{(s)}\geq 0} \left (\left (\beta I_{sec}^{(1)}
+(1-\beta)I_{sec}^{(2)}\right )+\gamma_{sec}^{(s)}\right ),\label{eq:wdineq1}
\end{multline}
where
\begin{eqnarray}
I_{sec}^{(1)} & = & E\max_{\w_i^{(s)}}(|\h_i||\w_i^{(s)}|+\nu_{sec}^{(s)}|\w_i^{(s)}|^q-\gamma_{sec}^{(s)}(\w_i^{(s)})^2)\nonumber \\
I_{sec}^{(2)} & = & E\max_{\w_j^{(s)}}(|\h_j||\w_j^{(s)}|-\nu_{sec}^{(s)}|\w_j^{(s)}|^q-\gamma_{sec}^{(s)}(\w_j^{(s)})^2).\label{eq:defI1I2sec}
\end{eqnarray}
We summarize the above results related to the sectional threshold ($\beta_{sec}^{(q)}$) in the following theorem.

\begin{theorem}(Sectional threshold - lower bound)
Let $A$ be an $m\times n$ measurement matrix in (\ref{eq:system})
with i.i.d. standard normal components. Let $\tilde{X}_{sec}$ be the collection of all $k$-sparse vectors $\tilde{\x}$ in $R^n$ for which $\tilde{\x}_1=0,\tilde{\x}_2=0,,\dots,\tilde{\x}_{n-k}=0$. Let $\tilde{\x}^{(i)}$ be any $k$-sparse vector from $\tilde{X}_{sec}$. Further, assume that $\y^{(i)}=A\tilde{\x}^{(i)}$. Let $k,m,n$ be large
and let $\alpha=\frac{m}{n}$ and $\betasec^{(q)}=\frac{k}{n}$ be constants
independent of $m$ and $n$. Let
\begin{eqnarray}
I_{sec}^{(1)} & = & E\max_{\w_i}(|\h_i||\w_i^{(s)}|+\nu_{sec}^{(s)}|\w_i^{(s)}|^q-\gamma_{sec}^{(s)}(\w_i^{(s)})^2)\nonumber \\
I_{sec}^{(2)} & = & E\max_{\w_j}(|\h_j||\w_j^{(s)}|-\nu_{sec}^{(s)}|\w_j^{(s)}|^q-\gamma_{sec}^{(s)}(\w_j^{(s)})^2).\label{eq:defI1I2secthm}
\end{eqnarray}
If $\alpha$ and $\betasec^{(q)}$ are such that
\begin{equation}
\min_{\gamma_{sec}^{(s)}\geq 0,\nu_{sec}^{(s)}\geq 0} \left (\left (\beta_{sec}^{(q)} I_{sec}^{(1)}
+(1-\beta_{sec}^{(q)})I_{sec}^{(2)}\right )+\gamma_{sec}^{(s)}\right )<\sqrt{\alpha},\label{eq:seccondthmsec}
\end{equation}
then with overwhelming probability the solution of (\ref{eq:lq}) for every pair $(\y^{(i)},A)$ is the corresponding $\tilde{\x}^{(i)}$.\label{thm:thmsecthrlq}
\end{theorem}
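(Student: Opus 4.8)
The plan is to assemble the pieces already developed in the preceding subsections into a clean probabilistic statement. The logical skeleton is: Theorem \ref{thm:thmgensec} reduces sectional recovery to the event that the null space of $A$ avoids the set $\Ssec$ defined in (\ref{eq:defSsec}); Gordon's escape-through-a-mesh result (Theorem \ref{thm:Gordonmesh}) gives an overwhelming-probability guarantee for that avoidance provided the Gaussian width $w_D(\Ssec)$ is strictly below roughly $\sqrt{m}$; and the chain of inequalities (\ref{eq:seceq1})--(\ref{eq:wdineq1}) produces a deterministic upper bound on $\lim_{n\to\infty} w_D(\Ssec)/\sqrt{n}$ in terms of $I_{sec}^{(1)}$, $I_{sec}^{(2)}$, $\beta_{sec}^{(q)}$, and the auxiliary variables $\gamma_{sec}^{(s)},\nu_{sec}^{(s)}$. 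So first I would invoke Theorem \ref{thm:thmgensec}: it suffices to show that with overwhelming probability every nonzero $\w$ in the null space of $A$ satisfies the strict inequality in (\ref{eq:thmeqgensec1}), equivalently that the null space (an $(n-m)$-dimensional random subspace, since $A$ has i.i.d.\ standard normal entries) intersects $\Ssec$ only in the origin.

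Next I would apply Gordon's theorem with $S=\Ssec\subseteq S^{n-1}$ and $Y$ the null space of $A$. The hypothesis required is $w_D(\Ssec) < \sqrt{m} - \tfrac{1}{4\sqrt{m}}$. To verify this in the large-$n$ regime, I would use the scaling substitution $\w_i = \w_i^{(s)}/\sqrt{n}$, $\gamma_{sec}=\gamma_{sec}^{(s)}\sqrt{n}$, $\nu_{sec}=\nu_{sec}^{(s)}\sqrt{n}^{\,q-1}$ introduced before (\ref{eq:wdineq1}), together with the Lagrangian/minimax bound (\ref{eq:seceq1}) and Jensen's inequality to pull the expectation inside the min over $(\gamma_{sec},\nu_{sec})$ as in (\ref{eq:wdineq}). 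This yields
\begin{equation}
\limsup_{n\to\infty}\frac{w_D(\Ssec)}{\sqrt{n}}\leq \min_{\gamma_{sec}^{(s)}\geq 0,\nu_{sec}^{(s)}\geq 0}\left(\left(\beta_{sec}^{(q)} I_{sec}^{(1)}+(1-\beta_{sec}^{(q)})I_{sec}^{(2)}\right)+\gamma_{sec}^{(s)}\right).\nonumber
\end{equation}
Since $m=\alpha n$, the right-hand side of (\ref{eq:seccondthmsec}) being strictly less than $\sqrt{\alpha}$ means precisely that for all large $n$ we have $w_D(\Ssec) < \sqrt{\alpha n}\,(1-\epsilon) < \sqrt{m}-\tfrac{1}{4\sqrt{m}}$ for some fixed $\epsilon>0$, so Gordon's hypothesis holds.

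Finally, the conclusion (\ref{eq:thmesh}) gives $P(Y\cap \Ssec = 0) > 1 - 3.5\, e^{-(\sqrt{m}-\frac{1}{4\sqrt{m}}-w_D(\Ssec))^2/18}$, and because the exponent grows like a positive constant times $m=\alpha n$, this probability is $1$ minus a quantity exponentially small in $n$ — i.e.\ the event holds with overwhelming probability. Combining with Theorem \ref{thm:thmgensec} (which applies pointwise to each $\tilde{\x}^{(i)}$, and the null-space event is a single event not depending on the choice of $\tilde{\x}^{(i)}\in\tilde{X}_{sec}$), we conclude that with overwhelming probability the solution of (\ref{eq:lq}) is the corresponding $\tilde{\x}^{(i)}$ simultaneously for every pair $(\y^{(i)},A)$, which is the claim.

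I expect the main subtlety — though it is largely already handled in the excerpt — to be the justification that the finite-$n$ width $w_D(\Ssec)$ is genuinely controlled by its limiting value uniformly enough to feed into Gordon's inequality: one must be careful that the $\min$ over $(\gamma_{sec}^{(s)},\nu_{sec}^{(s)})$ commutes appropriately with the limit and that the optimizing parameters stay bounded, so that the strict inequality in (\ref{eq:seccondthmsec}) translates into a strict inequality with a uniform gap for all sufficiently large $n$. The integrals $I_{sec}^{(1)}, I_{sec}^{(2)}$ themselves (expectations of a one-dimensional maximization over $\w_i^{(s)}$ against a half-normal $|\h_i|$) are finite precisely because the $-\gamma_{sec}^{(s)}(\w_i^{(s)})^2$ term dominates for large $|\w_i^{(s)}|$ whenever $\gamma_{sec}^{(s)}>0$, so the min is effectively over $\gamma_{sec}^{(s)}>0$; I would note this to ensure the bound is well-defined, but would not grind through the explicit evaluation of these integrals here.
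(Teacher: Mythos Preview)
Your proposal is correct and follows essentially the same approach as the paper: the paper's own proof is simply ``Follows from the above discussion,'' and the discussion you have reconstructed (Theorem~\ref{thm:thmgensec} to reduce to a null-space condition, Gordon's escape-through-a-mesh to control it via $w_D(\Ssec)$, and the Lagrangian/scaling computation (\ref{eq:seceq1})--(\ref{eq:wdineq1}) to bound $w_D(\Ssec)/\sqrt{n}$) is exactly that discussion. Your added remarks on the uniformity of the gap and the finiteness of $I_{sec}^{(1)},I_{sec}^{(2)}$ for $\gamma_{sec}^{(s)}>0$ are more explicit than what the paper states, but consistent with it.
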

\begin{proof}
Follows from the above discussion.
\end{proof}

The results for the sectional threshold obtained from the above theorem
are presented in Figure \ref{fig:sec}. To be a bit more specific, we selected four different values of $q$, namely $q\in\{0,0.1,0.3,0.5\}$ in addition to standard $q=1$ case already discussed in \cite{StojnicCSetam09}.
\begin{figure}[htb]
%%%%%\begin{minipage}[b]{1.0\linewidth}
\centering
\centerline{\epsfig{figure=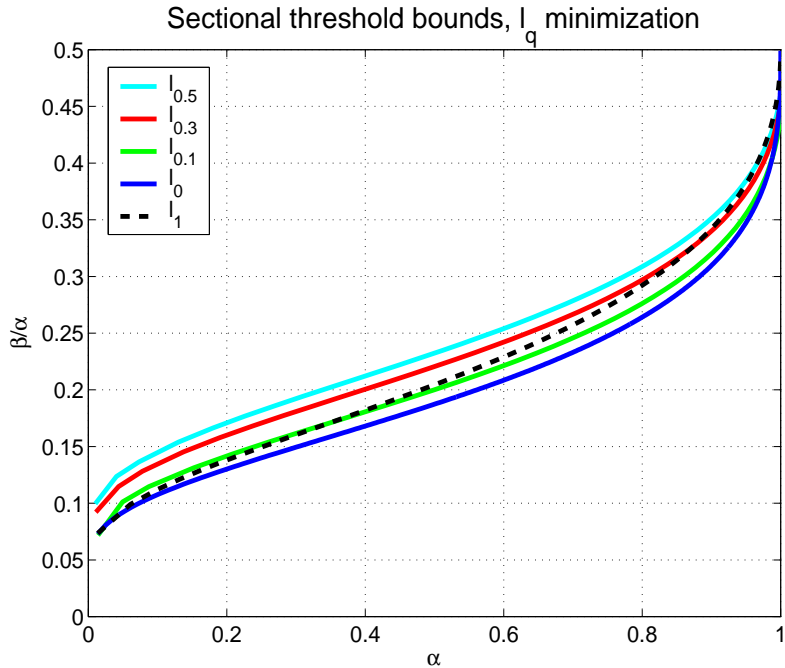,width=10.5cm,height=9cm}}
%%%%%%\end{minipage}
\caption{\emph{Sectional} threshold, $\ell_q$-optimization}
\label{fig:sec}
\end{figure}
As can be seen from Figure \ref{fig:sec}, for some values of $q$ the results are better than for $q=1$. However, for some the results are worse. Of course one has to be careful how to interpret this. First, one may naturally expect that as $q$ goes down the threshold results become better, i.e. the resulting curves go up. That does happen down to some values for $q$; however, after that the curves start sliding down and eventually for $q=0$ we actually have a curve that is even below $q=1$ case. Of course this just shows that our methodology works successfully to a degree, i.e. its a lower-bounding tendency eventually comes into a full effect. Of, course if one is interested in the best possible sectional threshold values for any $q$ rather than the methodology itself the curves that go down as $q$ goes up could be ignored. However, we kept them on the plot to emphasize that the proposed methodology has some inherent deficiencies.

The obtained results can also be compared with the best known ones for $\ell_1$-minimization from \cite{StojnicLiftStrSec13} as well. However, since these are fairly close to the curve that corresponds to $\ell_1$ given in Figure \ref{fig:sec} we skip adding these plots and making the figure even more detailed.

Also, all results are obtained after numerical computations. They mostly included numerical optimizations which were all (except maximization over $\w$) done on a local optimum level. We do not know how (if in any way) solving them on a global optimum level would affect the location of the plotted curves. Also, numerical integrations were done on a finite precision level as well which could have potentially harmed the final results as well. Still, we believe that the methodology can not achieve substantially more than what we presented in Figure \ref{fig:sec} (and hopefully is not severely degraded with numerical integrations and maximization over $\w$).

Solving over $\nu_{sec}^{(s)}$ and $\gamma_{sec}^{(s)}$ on a local optimum level may lower the curves but it
certainly does not jeopardize their lower bounding rigorousness. However, solving the maximization over $\w$,
even on a global optimum level as we did, may do so. Since this may jeopardize the lower bounding rigorousness
in addition to plots in Figure \ref{fig:sec} we present in Tables \ref{tab:sectab1}, \ref{tab:sectab2}, and \ref{tab:sectab3} the concrete values we obtained for $\nu_{sec}^{(s)}$ and
$\gamma_{sec}^{(s)}$ for certain $\beta_{sec}^{(q)}$ on the way to computing corresponding $\alpha$ (as indicated above the tables, Table \ref{tab:sectab1} contains data for $\ell_q, q=0.5$, Table1 \ref{tab:sectab2} contains data for $\ell_q, q=0.3$, and Table \ref{tab:sectab3} contains data for $\ell_q, q=0.1$,). That way the
interested reader can double check if the optimization over $\w$ in any way endangered the lower-bounding rigorousness.
Of course, we do reemphasize that the results presented in the above theorem are completely rigorous,
it is just that some of the numerical work that we performed could have been a bit imprecise
(we firmly believe that this is not the case; however with finite numerical precision one has to be cautious all the time).

%%%%%%%%%%%%%%%%%%%%%%%%%%%%%%%%%%%%%%%%%%%%%%%%%%%%%%%%%%%%%%%%%
\subsection{Special cases}
\label{sec:secthrspecial}
%%%%%%%%%%%%%%%%%%%%%%%%%%%%%%%%%%%%%%%%%%%%%%%%%%%%%%%%%%%%%%%%%

In this subsection we briefly note that some of the above computations can be done in a faster, more explicit fashion.

%%%%%%%%%%%%%%%%%%%%%%%%%%%%%%%%%%%%%%%%%%%%%%%%%%%%%%%%%%%%%%%%%
\subsubsection{$q\rightarrow 0$}
\label{sec:secthrspecialq0}
%%%%%%%%%%%%%%%%%%%%%%%%%%%%%%%%%%%%%%%%%%%%%%%%%%%%%%%%%%%%%%%%%

The first case we consider is $q=0$. From the plot given in Figure \ref{fig:sec} the methodology is not quite successful for this case. Nevertheless, the curve given in Figure \ref{fig:sec} can be obtained in a more direct fashion without all the computations required by Theorem \ref{thm:thmsecthrlq}. Here is a brief sketch how one can proceed. Let
\begin{equation}
\tilde{\h}=[\h_{(1)},\h_{(2)},\dots,\h_{(n-k)},|\h_{n-k+1}|,|\h_{n-k+2}|,\dots,|\h_n|],\label{eq:defhq0}
\end{equation}
where
$[\h_{(1)},\h_{(2)},\dots,\h_{(n-k)}]$ are the absolute values of components of $[\h_{1},\h_{2},\dots,\h_{n-k}]$ sorted in an increasing order. Then one has
\begin{multline}
\lim_{n\rightarrow\infty}\frac{w_D(\Ssec)}{\sqrt{n}}=\lim_{n\rightarrow\infty}\frac{E\max_{\w\in\Ssec}\h^T\w}{\sqrt{n}}
=\lim_{n\rightarrow\infty}\frac{E\max_{\w\in\Ssec}\sum_{i=1}^{n}\tilde{\h}_i|\w_i|}{\sqrt{n}}\\
=\lim_{n\rightarrow\infty}\frac{E\sqrt{\sum_{i=n-2k+1}^{n-k}\h_{(i)}^2+\sum_{i=n-k+1}^{n}\h_{i}^2}}{\sqrt{n}}
\leq \lim_{n\rightarrow\infty}\frac{\sqrt{E\sum_{i=n-2k+1}^{n-k}\h_{(i)}^2+E\sum_{i=n-k+1}^{n}\h_{i}^2}}{\sqrt{n}}.\label{eq:wdlq0}
\end{multline}
Applying the machinery of \cite{StojnicCSetam09} then gives
\begin{multline}
\lim_{n\rightarrow\infty}\frac{w_D(\Ssec)}{\sqrt{n}}
\leq \lim_{n\rightarrow\infty}\frac{\sqrt{E\sum_{i=n-2k+1}^{n-k}\h_{(i)}^2+E\sum_{i=n-k+1}^{n}\h_{i}^2}}{\sqrt{n}}\\
=\sqrt{\beta_{sec}^{(0)}+(1-\beta_{sec}^{(0)})\frac{2}{\sqrt{\pi}}\mbox{erfinv}\left (\frac{1-2\beta_{sec}^{(0)}}{1-\beta_{sec}^{(0)}}\right )
e^{-\left (\mbox{erfinv}\left (\frac{1-2\beta_{sec}^{(0)}}{1-\beta_{sec}^{(0)}}\right)\right )^2}}.\label{eq:wdlq0}
\end{multline}
Equalling the quantity on the right hand side with $\sqrt{\alpha}$ then gives the characterization of $\ell_0$ curve in Figure \ref{fig:sec}.

\begin{table}%[t]
\caption{Sectional threshold bounds $\ell_q,q=0.5$}\vspace{0in}
\centering\hspace{-.3in}
\begin{tabular}{||c|c|c|c|c|c|c|c|c|c|c|c||}\hline\hline
$\beta_{sec}^{(q)}$  & $0.0050$ &  $0.0200$ & $0.0400$ & $0.0600$ & $0.0900$ & $0.1200$ & $0.1500$ & $0.2000$ & $0.2500$ & $0.3200$ & $0.4500$  \\ \hline\hline
$\alpha$             & $0.0405$ &  $0.1299$ & $0.2262$ & $0.3091$ & $0.4173$ & $0.5112$ & $0.5938$ & $0.7105$ & $0.8051$ & $0.9046$ & $0.9974$  \\ \hline
$\nu_{sec}^{(s)}$    & $5.8112$ &  $3.2935$ & $2.3730$ & $1.9152$ & $1.5033$ & $1.2328$ & $1.0329$ & $0.7910$ & $0.6021$ & $0.3906$ & $0.0866$  \\ \hline
$\gamma_{sec}^{(s)}$ & $0.1005$ &  $0.1800$ & $0.2372$ & $0.2775$ & $0.3222$ & $0.3565$ & $0.3841$ & $0.4199$ & $0.4475$ & $0.4740$ & $0.4977$  \\ \hline\hline
\end{tabular}
\label{tab:sectab1}
\end{table}

\begin{table}%[t]
\caption{Sectional threshold bounds $\ell_q,q=0.3$}\vspace{0in}
\hspace{-0in}\centering
\begin{tabular}{||c|c|c|c|c|c|c|c|c|c|c|c||}\hline\hline
$\beta_{sec}^{(q)}$  & $0.0050$ & $0.0100$ & $0.0300$ & $0.0500$ & $0.0800$ & $0.1100$ & $0.1500$ & $0.1900$ & $0.2400$ & $0.3100$ & $0.4500$  \\ \hline\hline
$\alpha$             & $0.0436$ & $0.0780$ & $0.1900$ & $0.2821$ & $0.3992$ & $0.4991$ & $0.6124$ & $0.7073$ & $0.8042$ & $0.9047$ & $0.9992$  \\ \hline
$\nu_{sec}^{(s)}$    & $9.2019$ & $6.4961$ & $3.5738$ & $2.6101$ & $1.8927$ & $1.4778$ & $1.1231$ & $0.8727$ & $0.6335$ & $0.3965$ & $0.0667$\\ \hline
$\gamma_{sec}^{(s)}$ & $0.1039$ & $0.1398$ & $0.2174$ & $0.2649$ & $0.3152$ & $0.3520$ & $0.3900$ & $0.4192$ & $0.4471$ & $0.4741$ & $0.4983$ \\ \hline\hline
\end{tabular}
\label{tab:sectab2}
\end{table}

\begin{table}%[t]
\caption{Sectional threshold bounds $\ell_q,q=0.1$}\vspace{0in}
\hspace{-0in}\centering
\begin{tabular}{||c|c|c|c|c|c|c|c|c|c|c|c||}\hline\hline
$\beta_{sec}^{(q)}$  & $0.0010$ & $0.0100$ & $0.0300$ & $0.0500$ & $0.0700$ & $0.1000$ & $0.1300$ & $0.1700$ & $0.2200$ & $0.2900$ & $0.4400$ \\ \hline\hline
$\alpha$             & $0.0139$ & $0.0873$ & $0.2089$ & $0.3069$ & $0.3912$ & $0.4998$ & $0.5921$ & $0.6953$ & $0.7983$ & $0.9023$ & $0.9997$ \\ \hline
$\nu_{sec}^{(s)}$    & $26.050$ & $9.2658$ & $4.5185$ & $3.1043$ & $2.3942$ & $1.7389$ & $1.3434$ & $0.9913$ & $0.6908$ & $0.4044$ & $0.0514$\\ \hline
$\gamma_{sec}^{(s)}$ & $0.0781$ & $0.1473$ & $0.2282$ & $0.2764$ & $0.3119$ & $0.3528$ & $0.3830$ & $0.4153$ & $0.4453$ & $0.4734$ & $0.4983$   \\ \hline\hline
\end{tabular}
\label{tab:sectab3}
\end{table}

%%%%%%%%%%%%%%%%%%%%%%%%%%%%%%%%%%%%%%%%%%%%%%%%%%%%%%%%%%%%%%%%%
\subsubsection{$q=\frac{1}{2}$}
\label{sec:secthrspecialq05}
%%%%%%%%%%%%%%%%%%%%%%%%%%%%%%%%%%%%%%%%%%%%%%%%%%%%%%%%%%%%%%%%%

Another special case that allows a further simplification of the results presented in Theorem \ref{thm:thmsecthrlq} is when $q=\frac{1}{2}$. In this case one can be more explicit when it comes to the optimization over $\w$. Namely, taking simply the derivatives one finds
\begin{equation*}
|\h_i|\pm q\nu_{sec}^{(s)}|\w_i^{(s)}|^{q-1}-2\gamma_{sec}^{(s)}|\w_i^{(s)}|=0,
\end{equation*}
which when $q=\frac{1}{2}$ gives
\begin{eqnarray}
& & |\h_i|\pm\frac{1}{2}\nu_{sec}^{(s)}|\w_i^{(s)}|^{-1/2}-2\gamma_{sec}^{(s)}|\w_i^{(s)}|=0\nonumber \\
& \Leftrightarrow & |\h_i|\sqrt{|\w_i^{(s)}|}\pm\frac{1}{2}\nu_{sec}^{(s)}-2\gamma_{sec}^{(s)}\sqrt{|\w_i^{(s)}|}^{3}=0,\label{eq:cubicq05}
\end{eqnarray}
which is a cubic equation and can be solved explicitly. This of course substantially facilitates the integrations over $\h_i$. Also, similar strategy can be applied for other rational $q$. However, the ``explicit" solutions soon become more complicated than the numerical ones and we skip presenting them.

%%%%%%%%%%%%%%%%%%%%%%%%%%%%%%%%%%%%%%%%%%%%%%%%%%%%%%%%%%%%%%%%%
\section{$\ell_q$-minimization strong threshold}
\label{sec:strthr}
%%%%%%%%%%%%%%%%%%%%%%%%%%%%%%%%%%%%%%%%%%%%%%%%%%%%%%%%%%%%%%%%%

In this section we present results related to the $\ell_q$ minimization strong thresholds. As was the case in the previous section, before proceeding further we slightly readjust the definition of the $\ell_1$ strong thresholds given earlier in the context of $\ell_1$ minimization so that it fits the $\ell_q$ case considered here. Namely, one considers a scenario where for any given constant
$\alpha\leq 1$ and \emph{any} $\tilde{\x}$ in \ref{eq:yrepsystem} with a given fixed location of non-zero components
there will be a maximum allowable value of $\beta$ such that the solution of
(\ref{eq:lq}) is that given $\tilde{\x}$ with overwhelming
probability. We will refer to such a $\beta$ as the \emph{strong threshold} and will denote it by $\beta_{sec}^{(q)}$ (we again recall that more on the definition of the strong threshold the interested reader can find in e.g. \cite{DonohoPol,StojnicCSetam09}).

%%%%%%%%%%%%%%%%%%%%%%%%%%%%%%%%%%%%%%%%%%%%%%%%%%%%%%%%%%%%%%%%%
\subsection{Strong threshold preliminaries}
\label{sec:strthrprelim}
%%%%%%%%%%%%%%%%%%%%%%%%%%%%%%%%%%%%%%%%%%%%%%%%%%%%%%%%%%%%%%%%%

Below we will provide a way to quantify behavior of $\beta_{sec}^{(q)}$. In doing so we will, as in the previous section, rely on some of the mechanisms presented in \cite{StojnicCSetam09} and a few additional ones from \cite{StojnicLiftStrSec13}. Along the same lines, we will assume a substantial level of familiarity with many of the well-known results that relate to the performance characterization of (\ref{eq:l1}) (we will fairly often recall on many results/definitions that we established in \cite{StojnicCSetam09,StojnicLiftStrSec13}). We start by introducing a nice way of characterizing strong success/failure of (\ref{eq:lq}).

\begin{theorem}(Any $k$-sparse $\x$)
Assume that an $m\times n$ matrix $A$ is given. Let $\tilde{X}_{str}$ be the collection of all $k$-sparse vectors in $R^n$. Let $\tilde{\x}^{(i)}$ be any $k$-sparse vector from $\tilde{X}$. Further, assume that $\y^{(i)}=A\tilde{\x}^{(i)}$ and that $\w$ is
an $n\times 1$ vector. If
\begin{equation}
(\forall \w\in \textbf{R}^n | A\w=0) \quad  \sum_{i=1}^n \b_i |\w_i|^q>0,\sum_{i=1}^n\b_i=2n-k,\b_i^2=1),\label{eq:thmeqgensec1}
\end{equation}
then the solution of (\ref{eq:lq}) for every pair $(\y^{(i)},A)$ is the corresponding $\tilde{\x}^{(i)}$.
%Moreover, if
%\begin{equation}
%(\exists \w\in \textbf{R}^n | A\w=0) \quad  \sum_{i=n-k+1}^n |\w_i|>\sum_{i=1}^{n-k}|\w_{i}|
%\label{eq:thmeqgensec2}
%\end{equation}
%then there will be a $k$-sparse $\x$ that satisfies (\ref{eq:system}) and is not the solution of (\ref{eq:l1}).
\label{thm:thmgenstr}
\end{theorem}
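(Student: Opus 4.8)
The plan is to reuse, almost verbatim, the contradiction argument of Theorem~\ref{thm:thmgensec}; the only genuinely new point is that the support of $\tilde{\x}^{(i)}$ is no longer fixed, so the single null-space inequality there must be replaced by a whole family of them, indexed by the admissible sign patterns $\b$ (those with $\b_i\in\{-1,1\}$ and exactly $k$ of the $\b_i$ equal to $-1$). Fix one $\tilde{\x}^{(i)}$ and let $\hatx$ be a solution of (\ref{eq:lq}) for the pair $(\y^{(i)},A)$. Suppose toward a contradiction that $\hatx\neq\tilde{\x}^{(i)}$. Since $A\hatx=\y^{(i)}=A\tilde{\x}^{(i)}$, we may write $\hatx=\tilde{\x}^{(i)}+\w$ with $\w\neq\0$ and $A\w=\0$, and optimality of $\hatx$ gives $\sum_{j=1}^n|\tilde{\x}^{(i)}_j+\w_j|^q\leq\sum_{j=1}^n|\tilde{\x}^{(i)}_j|^q$.

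The one analytic ingredient is the quasi-triangle inequality $|a+b|^q\geq|a|^q-|b|^q$, valid for every $q\in[0,1]$ (for $q\in(0,1)$ because $t\mapsto t^q$ is concave, nondecreasing and vanishes at $0$, hence subadditive; for $q=1$ it is the ordinary triangle inequality; for $q=0$, with the convention $0^0:=0$, it is an immediate case check). Let $K$ be the support of $\tilde{\x}^{(i)}$. Applying this coordinate-wise on $K$, and using $|\tilde{\x}^{(i)}_j+\w_j|^q=|\w_j|^q$ for $j\notin K$, the optimality inequality reduces to
\[
\sum_{j\notin K}|\w_j|^q\;\leq\;\sum_{j\in K}|\w_j|^q .
\]
One may take $|K|=k$: if $\tilde{\x}^{(i)}$ has fewer than $k$ nonzeros, enlarge $K$ to a $k$-set by adjoining indices carrying the largest values $|\w_j|^q$, which only increases the left side and decreases the right, so the size-$k$ case is the binding one and is exactly the one the hypothesis is tailored to.

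I would then finish exactly as in Theorem~\ref{thm:thmgensec}: choose the sign vector $\b$ with $\b_j=-1$ on $K$ and $\b_j=+1$ off $K$. This $\b$ is admissible (it has precisely $k$ entries equal to $-1$), so the hypothesis applied to the nonzero null-space vector $\w$ forces $\sum_{j=1}^n\b_j|\w_j|^q>0$; but the displayed inequality says this sum equals $\sum_{j\notin K}|\w_j|^q-\sum_{j\in K}|\w_j|^q\leq 0$, a contradiction. Hence $\hatx=\tilde{\x}^{(i)}$, and since $i$ was arbitrary the conclusion holds for every pair $(\y^{(i)},A)$.

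The argument carries essentially no difficulty beyond bookkeeping: the one thing to get right is that the hypothesis must quantify over \emph{all} admissible $\b$ — equivalently, over all placements of the support — since it is this uniformity, absent from the sectional statement (which used one fixed partition of the coordinates), that lets the contradiction go through regardless of which $k$-subset carries $\tilde{\x}^{(i)}$. As in the sectional case the condition is in fact necessary as well as sufficient, by a routine converse construction (take $\tilde{\x}$ to be the restriction to a violating $K$ of a violating $\w$, with sign flipped), and, just as for Theorem~\ref{thm:thmgensec}, the whole thing is standard — it is the $q=1$ argument of \cite{DH01,XHapp}, already extended to general $q$ in \cite{GN03,GN04,GN07,FL08} — so I would present only a sketch.
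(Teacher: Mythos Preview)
Your proposal is correct and follows essentially the same approach as the paper, which simply states that the result follows from Theorem~\ref{thm:thmgensec} (the sectional case) by considering all possible locations of the $k$ nonzero components; you have merely unrolled that invocation into an explicit contradiction argument. One small slip: when you enlarge $K$ to a $k$-set, the left side $\sum_{j\notin K}|\w_j|^q$ of your displayed inequality \emph{decreases} and the right side \emph{increases}, not the reverse --- but this only makes the inequality you need easier, so your conclusion is unaffected.
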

\begin{proof}
The proof follows directly from Theorem \ref{thm:thmgenstr} by considering all different locations of $k$ nonzero components of $\tilde{\x}$.
As such, it obviously follows directly from the corresponding results for $\ell_q$ (see, e.g. Theorem $2$ in \cite{StojnicICASSP09} and references therein).
\end{proof}

\noindent \textbf{Remark:} As mentioned after the corresponding sectional threshold theorem, the above theorem is not our own. It clearly follows from the strategy that would be applied for $q=1$, i.e. $\ell_1$ which had been detailed in many places, see e.g. \cite{DH01,FN,LN,Y,XHapp,SPH,DTbern}. Also, as mentioned earlier, such a strategy has already been adapted to this very same case of general $q$ as well, see e.g. \cite{GN03,GN04,GN07,FL08}.

We now start by following what we did in the previous section and essentially in \cite{StojnicCSetam09}.
Let $\Sstr$ be the following set
\begin{equation}
\Sstr=\{\w\in S^{n-1}| \quad \sum_{i=1}^n \b_i|\w_i|^q\geq 0, \sum_{i=1}^n\b_i=2n-k,\b_i^2=1\},\label{eq:defSstr}
\end{equation}
where $S^{n-1}$ is the unit sphere in $R^n$. The methodology of the previous section and \cite{StojnicCSetam09} then proceeds by characterizing
\begin{equation}
w_D(\Sstr)=E\max_{\w\in \Sstr} (\h^T\w),\label{eq:negham1str}
\end{equation}
where, as in previous section, to facilitate the exposition we replace $\sup$ with a $\max$. Below we present a way to create an upper-bound on $w_D(\Sstr)$. Equalling such an upper bound with $\sqrt{m}$ would be roughly enough to provide a characterization of the strong thresholds.

%%%%%%%%%%%%%%%%%%%%%%%%%%%%%%%%%%%%%%%%%%%%%%%%%%%%%%%%%%%%%%%%%
\subsection{Strong threshold computation}
\label{sec:strthrcomp}
%%%%%%%%%%%%%%%%%%%%%%%%%%%%%%%%%%%%%%%%%%%%%%%%%%%%%%%%%%%%%%%%%

As earlier, let $f(\w)=\h^T\w$ and
we start with the following line of identities
\begin{multline}
\hspace{-.5in}\max_{\w\in\Sstr}f(\w)\\
=-\min_{\w\in\Sstr} -\h^T\w=-\min_{\w,\sum_{i=1}^n\b_i=2n-k,\b_i^2=1}\max_{\gamma_{str}\geq 0,\nu_{str}\geq 0} -\h^T\w
-\nu_{str}\sum_{i=1}^{n}\b_i|\w_i|^q+\gamma_{str}\sum_{i=1}^{n}\w_i^2-\gamma_{str}\\
\leq -\max_{\gamma_{str}\geq 0,\nu_{str}\geq 0}\min_{\w,\sum_{i=1}^n\b_i=2n-k,\b_i^2=1} -\h^T\w
-\nu_{str}\sum_{i=1}^{n}\b_i|\w_i|^q
+\gamma_{str}\sum_{i=1}^{n}\w_i^2-\gamma_{str}\\
= \min_{\gamma_{str}\geq 0,\nu_{str}\geq 0}\max_{\w,\sum_{i=1}^n\b_i=2n-k,\b_i^2=1} \h^T\w
+\nu_{str}\sum_{i=1}^{n}\b_i|\w_i|^q
-\gamma_{str}\sum_{i=1}^{n}\w_i^2+\gamma_{str}\\
= \min_{\gamma_{str}\geq 0,\nu_{str}\geq 0}\max_{\w,\sum_{i=1}^n\b_i=2n-k,\b_i^2=1} \sum_{i=1}^{n}|\h_i||\w_i|
+\nu_{str}\sum_{i=1}^{n}\b_i|\w_i|^q
-\gamma_{str}\sum_{i=1}^{n}\w_i^2+\gamma_{str}\\
= \min_{\gamma_{str}\geq 0,\nu_{str}\geq 0} f_2(q,\h,\nu_{str},\gamma_{str},\b,\beta)+\gamma_{str}.\label{eq:streq1}
%=\max_{\gamma_{str}\geq 0}(-\frac{\|\g\|_2^2}{4\gamma_{sph}}-\gamma_{sph})
\end{multline}
where
\begin{equation}
f_2(q,\h,\nu_{str},\gamma_{str},\b,\beta)=\max_{\w,\sum_{i=1}^n\b_i=2n-k,\b_i^2=1} \left (\sum_{i=1}^{n}|\h_i||\w_i|
+\nu_{str}\sum_{i=1}^{n}\b_i|\w_i|^q
-\gamma_{str}\sum_{i=1}^{n}\w_i^2\right ).\label{eq:deff1str}
\end{equation}
%Connecting (\ref{eq:negham1str}) and (\ref{eq:streq1}) one then has
%\begin{equation}
%\lim_{n\rightarrow\infty}\frac{w_D(\Sstr)}{\sqrt{n}}=\lim_{n\rightarrow\infty}\frac{E\max_{\w\in \Sstr} (\h^T\w)}{\sqrt{n}},\label{eq:negham1str}
%\end{equation}
One then has
\begin{multline}
w_D(\Sstr)=E\max_{\w\in\Sstr}\h^T\w=E\max_{\w\in\Sstr}f(\w)=
E\min_{\gamma_{str}\geq 0,\nu_{str}\geq 0} f_2(q,\h,\nu_{str},\gamma_{str},\b,\beta)+\gamma_{str}\\
\leq \min_{\gamma_{str}\geq 0,\nu_{str}\geq 0} E f_1(q,\h,\nu_{str},\gamma_{str},\b,\beta)+\gamma_{str}.\label{eq:wdineqstr}
\end{multline}
Now if one sets $\w_{i}=\frac{\w_{i}^{(s)}}{\sqrt{n}}$, $\gamma_{str}=\gamma_{str}^{(s)}\sqrt{n}$, and $\nu_{str}=\nu_{str}^{(s)}\sqrt{n}^{q-1}$ (where $\w_{i}^{(s)}$, $\gamma_{str}^{(s)}$, and $\nu_{str}^{(s)}$ are independent of $n$) then (\ref{eq:wdineqstr}) gives
\begin{multline}
\lim_{n\rightarrow\infty}\frac{w_D(\Sstr)}{\sqrt{n}}=\lim_{n\rightarrow\infty}\frac{E\max_{\w\in\Sstr}\h^T\w}{\sqrt{n}}
=\lim_{n\rightarrow\infty}\frac{E\max_{\w\in\Sstr}f(\w)}{\sqrt{n}}\\
\hspace{-.3in}=\lim_{n\rightarrow\infty}\frac{E\min_{\gamma_{str}\geq 0,\nu_{str}\geq 0} (f_2(q,\h,\nu_{str},\gamma_{str},\b,\beta)+\gamma_{str})}{\sqrt{n}}
\leq \lim_{n\rightarrow\infty}\frac{\min_{\gamma_{str}\geq 0,\nu_{str}\geq 0} (E f_2(q,\h,\nu_{str},\gamma_{str},\b,\beta)+\gamma_{str})}{\sqrt{n}}\\
=\min_{\gamma_{str}^{(s)}\geq 0,\nu_{str}^{(s)}\geq 0} (E f_2(q,\h,\nu_{str}^{(s)},\gamma_{str}^{(s)},\b,\beta)+\gamma_{str}^{(s)})
%\\
%\hspace{-.5in}=\min_{\gamma_{str}^{(s)}\geq 0,\nu_{str}^{(s)}\geq 0} \left (\left (\beta E\max_{\w_i}(|\h_i||\w_i|+\nu_{str}^{(s)}|\w_i|^q-\gamma_{str}^{(s)}\w_i^2)
%+(1-\beta)E\max_{\w_j}(|\h_j||\w_j|-\nu_{str}^{(s)}|\w_j|^q-\gamma_{str}^{(s)}\w_j^2)\right )-\gamma_{str}^{(s)}\right )\\
%=\min_{\gamma_{str}^{(s)}\geq 0,\nu_{str}^{(s)}\geq 0} \left (\left (\beta I_{sec}^{(1)}
%+(1-\beta)I_{sec}^{(2)}\right )-\gamma_{str}^{(s)}\right )
.\label{eq:wdineq1str}
\end{multline}
where using the machinery of \cite{StojnicCSetam09} one can assume that all quantities of interest concentrate and based on ideas of \cite{StojnicLiftStrSec13} (equation $(76)$) obtain
\begin{equation}
E f_2(q,\h,\nu_{str}^{(s)},\gamma_{str}^{(s)},\b,\beta)=\begin{cases}\max_{\w_i^{(s)}}(|\h_i||\w_i^{(s)}|+\nu_{str}^{(s)}|\w_i^{(s)}|^{q}
-\gamma_{str}^{(s)}(\w_i^{(s)})^2),
& |\h_i|\geq c_{\nu}\\
\max_{\w_i^{(s)}}(|\h_i||\w_i^{(s)}|-\nu_{str}^{(s)}|\w_i^{(s)}|^{q}
-\gamma_{str}^{(s)}(\w_i^{(s)})^2),
& |\h_i|\geq c_{\nu}\end{cases}.\label{eq:deff2str}
\end{equation}
As in \cite{StojnicLiftStrSec13}, one then finds $c_{\nu}$ from $\beta=\int_{|\h_i|\geq c_{\nu}}\frac{e^{-\frac{\h_i^2}{2}}d\h_i}{\sqrt{2\pi}}$. Clearly, $c_{\nu}=\sqrt{2}\mbox{erfinv}(1-\beta)$. For brevity we then write
\begin{equation}
E f_2(q,\h,\nu_{str}^{(s)},\gamma_{str}^{(s)},\b,\beta)=I_{str}^{(1)}+I_{str}^{(2)},\label{eq:deff21str}
\end{equation}
where
\begin{eqnarray}
I_{str}^{(1)} & = & E_{|\h_i|\geq c_{\nu}}\max_{\w_i^{(s)}}(|\h_i||\w_i^{(s)}|+\nu_{str}^{(s)}|\w_i^{(s)}|^q-\gamma_{str}^{(s)}(\w_i^{(s)})^2)\nonumber \\
I_{str}^{(2)} & = & E_{|\h_i|\leq c_{\nu}}\max_{\w_i^{(s)}}(|\h_i||\w_i^{(s)}|-\nu_{str}^{(s)}|\w_i^{(s)}|^q-\gamma_{str}^{(s)}(\w_i^{(s)})^2).\label{eq:defI1I2str}
\end{eqnarray}
We summarize the above results related to the strong threshold ($\beta_{str}^{(q)}$) in the following theorem.

\begin{theorem}(Strong threshold - lower bound)
Let $A$ be an $m\times n$ measurement matrix in (\ref{eq:system})
with i.i.d. standard normal components.  Let $\tilde{X}_{str}$ be the collection of all $k$-sparse vectors in $R^n$. Let $\tilde{\x}^{(i)}$ be any $k$-sparse vector from $\tilde{X}_{str}$. Further, assume that $\y^{(i)}=A\tilde{\x}^{(i)}$. Let $k,m,n$ be large
and let $\alpha=\frac{m}{n}$ and $\betastr^{(q)}=\frac{k}{n}$ be constants
independent of $m$ and $n$. Also set $c_{\nu}=\sqrt{2}\mbox{erfinv}(1-\beta_{str}^{(q)})$. Let
\begin{eqnarray}
I_{str}^{(1)} & = & E_{|\h_i|\geq c_{\nu}}\max_{\w_i^{(s)}}(|\h_i||\w_i^{(s)}|+\nu_{str}^{(s)}|\w_i^{(s)}|^q-\gamma_{str}^{(s)}(\w_i^{(s)})^2)\nonumber \\
I_{str}^{(2)} & = & E_{|\h_i|\leq c_{\nu}}\max_{\w_i^{(s)}}(|\h_i||\w_i^{(s)}|-\nu_{str}^{(s)}|\w_i^{(s)}|^q-\gamma_{str}^{(s)}(\w_i^{(s)})^2).\label{eq:defI1I2strthm}
\end{eqnarray}
If $\alpha$ and $\betastr^{(q)}$ are such that
\begin{equation}
\min_{\gamma_{str}^{(s)}\geq 0,\nu_{str}^{(s)}\geq 0} \left (\left (I_{str}^{(1)}
+I_{str}^{(2)}\right )+\gamma_{str}^{(s)}\right )<\sqrt{\alpha},\label{eq:strcondthmsec}
\end{equation}
then with overwhelming probability the solution of (\ref{eq:lq}) for every pair $(\y^{(i)},A)$ is the corresponding $k$-sparse $\tilde{\x}^{(i)}$.\label{thm:thmstrthrlq}
\end{theorem}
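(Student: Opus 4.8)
The plan is to carry over, essentially line for line, the argument behind Theorem~\ref{thm:thmsecthrlq}, now with the failure set $\Ssec$ replaced by $\Sstr$ from (\ref{eq:defSstr}). First I would observe that Theorem~\ref{thm:thmgenstr} reduces the claim to a single geometric event: if no nonzero $\w$ in the null space of $A$ lies in $\Sstr$, then for \emph{every} $k$-sparse $\tilde{\x}^{(i)}$ the minimizer of (\ref{eq:lq}) with right-hand side $\y^{(i)}=A\tilde{\x}^{(i)}$ is exactly $\tilde{\x}^{(i)}$. Because $A$ has i.i.d. standard normal entries, its rows span a uniformly random $m$-dimensional subspace, so by rotational invariance the null space of $A$ is distributed exactly as the subspace $Y$ in Theorem~\ref{thm:Gordonmesh}. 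Hence Gordon's escape-through-a-mesh theorem applies with $S=\Sstr$: as soon as $w_D(\Sstr)<\sqrt{m}-\tfrac{1}{4\sqrt{m}}$, this null space meets $\Sstr$ only at the origin with probability at least $1-3.5\,e^{-(\sqrt{m}-1/(4\sqrt{m})-w_D(\Sstr))^2/18}$, which is overwhelming once the gap $\sqrt{m}-w_D(\Sstr)$ is a fixed positive fraction of $\sqrt{m}$.

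What is left is an asymptotic upper bound on $w_D(\Sstr)$, and this is precisely the chain of steps displayed in (\ref{eq:streq1})--(\ref{eq:wdineq1str}), which I would justify as follows. Encoding $\w\in\Sstr$ through Lagrange multipliers $\gamma_{str}\ge0$ (for $\|\w\|_2=1$) and $\nu_{str}\ge0$ (for $\sum_i\b_i|\w_i|^q\ge0$) and then swapping the inner $\min_{\w}\max_{\gamma_{str},\nu_{str}}$ for $\max_{\gamma_{str},\nu_{str}}\min_{\w}$ can only raise the value, which is the minimax (weak-duality) inequality in (\ref{eq:streq1}). For fixed $(\gamma_{str},\nu_{str},\b)$ the objective is a sum of one-variable functions of the $|\w_i|$, so the $\w$-optimization decouples coordinatewise; each problem $\max_{w\ge0}(|\h_i|w\pm\nu_{str}w^q-\gamma_{str}w^2)$ has a finite maximizer because for $q\le1$ and $\gamma_{str}>0$ the quadratic term eventually dominates (at $\gamma_{str}=0$ it is $+\infty$, so the outer minimum is automatically attained with $\gamma_{str}>0$). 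Since $\min_{\gamma_{str},\nu_{str}}(\cdot)$ is a pointwise lower bound for every fixed choice of multipliers, $E\min_{\gamma_{str},\nu_{str}}(\cdot)\le\min_{\gamma_{str},\nu_{str}}E(\cdot)$, the inequality we need. Finally the scaling $\w_i=\w_i^{(s)}/\sqrt{n}$, $\gamma_{str}=\gamma_{str}^{(s)}\sqrt{n}$, $\nu_{str}=\nu_{str}^{(s)}\sqrt{n}^{\,q-1}$ makes the per-coordinate problems $n$-independent, so $w_D(\Sstr)/\sqrt{n}$ is bounded above, in the limit, by $\min_{\gamma_{str}^{(s)},\nu_{str}^{(s)}\ge0}\big(E f_2+\gamma_{str}^{(s)}\big)$.

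The one place where the strong case genuinely departs from the sectional one --- and the step I expect to be the main obstacle --- is the evaluation of $E f_2$, i.e. the maximization over $\b\in\{\pm1\}^n$ under a fixed cardinality constraint. For a fixed $\w$ one wants $\b_i$ of the sign that makes $\nu_{str}^{(s)}\b_i|\w_i^{(s)}|^q$ as large as possible, so the cardinality constraint forces the unfavorable sign onto a fixed fraction of the coordinates, namely those of smallest $|\w_i^{(s)}|^q$; dualizing that constraint turns the choice into a threshold rule on $|\w_i^{(s)}|$, and since the per-coordinate optimizer $|\w_i^{(s)}|$ is a monotone function of $|\h_i|$, this is equivalent to a threshold $c_\nu$ on $|\h_i|$. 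The ``machinery of \cite{StojnicCSetam09}'' enters here as concentration of the empirical count $\#\{i:|\h_i|\ge c\}$ about its mean $n\,P(|\h_i|\ge c)$, which pins the threshold at the level $c_\nu$ with $\int_{|\h_i|\ge c_\nu}\frac{e^{-\h_i^2/2}}{\sqrt{2\pi}}\,d\h_i=\beta_{str}^{(q)}$, i.e. $c_\nu=\sqrt{2}\,\mbox{erfinv}(1-\beta_{str}^{(q)})$, and decouples $E f_2$ into the $+\nu$ contribution $I_{str}^{(1)}$ over $\{|\h_i|\ge c_\nu\}$ plus the $-\nu$ contribution $I_{str}^{(2)}$ over $\{|\h_i|<c_\nu\}$, exactly as in (\ref{eq:defI1I2str}). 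The care required is in checking that the penalized one-dimensional maxima are attained and vary monotonically and continuously in $|\h_i|$, so the threshold description is exact, and that every inequality accumulated so far still points in the upper-bounding direction after this identification.

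Putting the pieces together: under hypothesis (\ref{eq:strcondthmsec}) we obtain $\limsup_{n\to\infty}w_D(\Sstr)/\sqrt{n}\le\min_{\gamma_{str}^{(s)},\nu_{str}^{(s)}\ge0}\big(I_{str}^{(1)}+I_{str}^{(2)}+\gamma_{str}^{(s)}\big)<\sqrt{\alpha}=\sqrt{m}/\sqrt{n}$, so for all large $n$ one has $w_D(\Sstr)<\sqrt{m}-\tfrac{1}{4\sqrt{m}}$ with $\sqrt{m}-1/(4\sqrt{m})-w_D(\Sstr)$ growing like $\sqrt{n}$; then the bound of Theorem~\ref{thm:Gordonmesh} makes the event ``the null space of $A$ meets $\Sstr$ only at $0$'' hold with overwhelming probability, and on that event Theorem~\ref{thm:thmgenstr} yields exact recovery of every $\tilde{\x}^{(i)}$ by (\ref{eq:lq}), which is the assertion of Theorem~\ref{thm:thmstrthrlq}.
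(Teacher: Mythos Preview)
Your proposal is correct and follows essentially the same route as the paper: reduce to the null-space event via Theorem~\ref{thm:thmgenstr}, apply Gordon's escape theorem with $S=\Sstr$, upper-bound $w_D(\Sstr)$ by the Lagrangian/weak-duality chain (\ref{eq:streq1})--(\ref{eq:wdineq1str}), and then resolve the $\b$-maximization into the threshold split at $c_\nu$ using concentration as in \cite{StojnicCSetam09,StojnicLiftStrSec13}. The paper's own proof is just ``Follows from the above discussion,'' and your write-up is a faithful expansion of that discussion; the one place you add detail beyond the paper is the heuristic for why the optimal $\b$-assignment becomes a threshold on $|\h_i|$, which the paper simply attributes to equation~(76) of \cite{StojnicLiftStrSec13}.
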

\begin{proof}
Follows from the above discussion.
\end{proof}

The results for the strong threshold obtained from the above theorem
are presented in Figure \ref{fig:str}. To be a bit more specific, as when we presented the corresponding results for the sectional thresholds in the previous section we selected four different values of $q$, namely $q\in\{0,0.1,0.3,0.5\}$ in addition to standard $q=1$ case already discussed in \cite{StojnicCSetam09}.
\begin{figure}[htb]
%%%%%\begin{minipage}[b]{1.0\linewidth}
\centering
\centerline{\epsfig{figure=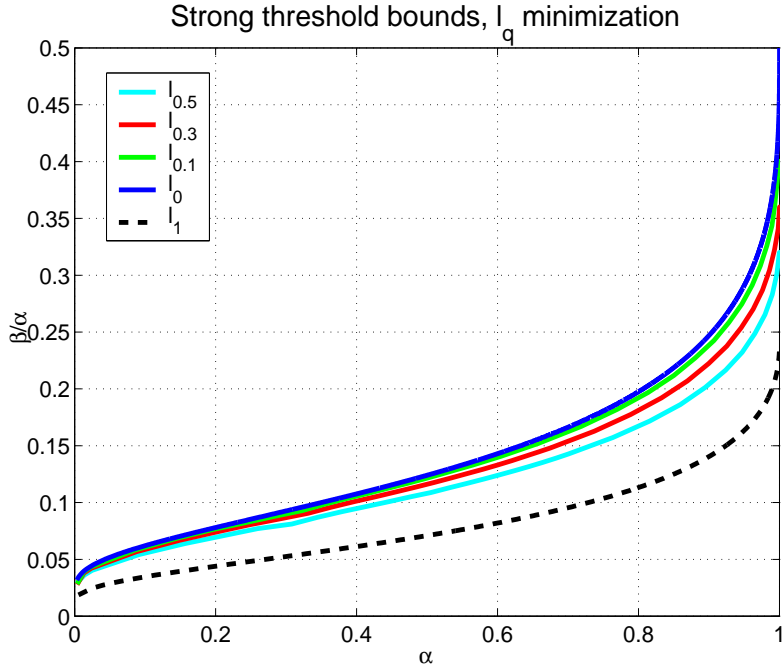,width=10.5cm,height=9cm}}
%%%%%%\end{minipage}
\caption{\emph{Strong} threshold, $\ell_q$-optimization}
\label{fig:str}
\end{figure}
As can be seen from Figure \ref{fig:sec}, the results are better than for $q=1$. Moreover, they hint that as $q$ is decreasing the strong thresholds are increasing a fact one may naturally expect. Of course, it is fairly obvious (as was when we studied sectional thresholds) our methodology works successfully to a degree, i.e. its a lower-bounding tendency eventually comes into a full effect. While to see that when for example $q=1$ one needs a quite extra knowledge (see, e.g. \cite{DonohoPol,StojnicLiftStrSec13}) it is quite obvious when $q=0$. In that case the true threshold should be substantially higher.

The obtained results can also be compared with the best known ones for $\ell_1$-minimization from \cite{StojnicLiftStrSec13,DonohoPol} as well. These are slightly above the curve that corresponds to $\ell_1$ given in Figure \ref{fig:sec}; however, since these use a more sophisticated methodology we skip adding them and making the figure even more detailed.

Also, as in the previous section, we again emphasize that all results are obtained after numerical computations (all of those were done in pretty much the same fashion as explained in the previous section). Since solving the maximization over $\w$
even on a global optimum level may again jeopardize the lower-bounding rigorousness of the presented results in addition to plots in Figure \ref{fig:str} we present in Tables \ref{tab:strtab1}, \ref{tab:strtab2}, and \ref{tab:strtab3} the concrete values we obtained for $\nu_{str}^{(s)}$ and
$\gamma_{str}^{(s)}$ for certain $\beta_{str}^{(q)}$ on the way to computing corresponding $\alpha$ (as indicated above the tables, Table \ref{tab:strtab1} contains data for $\ell_q, q=0.5$, Table \ref{tab:strtab2} contains data for $\ell_q, q=0.3$, and Table \ref{tab:strtab3} contains data for $\ell_q, q=0.1$). That way the
interested reader can again double check if the optimization over $\w$ in any way endangered the lower-bounding rigorousness.
Of course, as in the previous section, we again do reemphasize that the results presented in the above theorem are completely rigorous,
it is just that some of the numerical work that we performed could have been a bit imprecise
(we again firmly believe that this is not the case).

\begin{table}%[t]
\caption{Strong threshold bounds $\ell_q,q=0.5$}\vspace{0in}
\centering\hspace{-.3in}
\begin{tabular}{||c|c|c|c|c|c|c|c|c|c|c|c||}\hline\hline
$\beta_{str}^{(q)}$  & $0.0005$ &  $0.0050$ & $0.0150$ & $0.0250$ & $0.0400$ & $0.0550$ & $0.0750$ & $0.1000$ & $0.1400$ & $0.1800$ & $0.3200$  \\ \hline\hline
$\alpha$             & $0.0138$ &  $0.0919$ & $0.2114$ & $0.3081$ & $0.4142$ & $0.5053$ & $0.6030$ & $0.7006$ & $0.8156$ & $0.8944$ & $0.9998$  \\ \hline
$\nu_{str}^{(s)}$    & $9.2604$ &  $3.8721$ & $2.5000$ & $2.1680$ & $1.4450$ & $1.2500$ & $0.9423$ & $0.7368$ & $0.5141$ & $0.3577$ & $0.0286$  \\ \hline
$\gamma_{str}^{(s)}$ & $0.0587$ &  $0.1563$ & $0.2267$ & $0.2612$ & $0.3217$ & $0.3499$ & $0.3881$ & $0.4183$ & $0.4514$ & $0.4727$ & $0.4996$  \\ \hline\hline
\end{tabular}
\label{tab:strtab1}
\end{table}

\begin{table}%[t]
\caption{Strong threshold bounds $\ell_q,q=0.3$}\vspace{0in}
\hspace{-0in}\centering
\begin{tabular}{||c|c|c|c|c|c|c|c|c|c|c|c||}\hline\hline
$\beta_{str}^{(q)}$  & $0.0005$ & $0.0050$ & $0.0150$ & $0.0250$ & $0.0400$ & $0.0600$ & $0.0800$ & $0.1000$ & $0.1400$ & $0.2000$ & $0.3600$  \\ \hline\hline
$\alpha$             & $0.0132$ & $0.0879$ & $0.2020$ & $0.2918$ & $0.3968$ & $0.5100$ & $0.6007$ & $0.6752$ & $0.7888$ & $0.8995$ & $0.9999$  \\ \hline
$\nu_{str}^{(s)}$    & $17.763$ & $5.6990$ & $3.2832$ & $2.6563$ & $1.7745$ & $1.3136$ & $1.0333$ & $0.8362$ & $0.5737$ & $0.3330$ & $0.0259$\\ \hline
$\gamma_{str}^{(s)}$ & $0.0568$ & $0.1563$ & $0.2245$ & $0.2582$ & $0.3147$ & $0.3567$ & $0.3872$ & $0.4104$ & $0.4436$ & $0.4737$ & $0.4994$ \\ \hline\hline
\end{tabular}
\label{tab:strtab2}
\end{table}

\begin{table}%[t]
\caption{Strong threshold bounds $\ell_q,q=0.1$}\vspace{0in}
\hspace{-0in}\centering
\begin{tabular}{||c|c|c|c|c|c|c|c|c|c|c|c||}\hline\hline
$\beta_{str}^{(q)}$  & $0.0005$ & $0.0050$ & $0.0150$ & $0.0250$ & $0.0400$ & $0.0600$ & $0.0850$ & $0.1200$ & $0.1600$ & $0.2200$ & $0.4000$   \\ \hline\hline
$\alpha$             & $0.0128$ & $0.0858$ & $0.1966$ & $0.2843$ & $0.3862$ & $0.4963$ & $0.6045$ & $0.7187$ & $0.8132$ & $0.9070$ & $0.9991$   \\ \hline
$\nu_{str}^{(s)}$    & $34.531$ & $8.5931$ & $4.5230$ & $3.5547$ & $2.1967$ & $1.5735$ & $1.1320$ & $0.7736$ & $0.5276$ & $0.3035$ & $0.0223$   \\ \hline
$\gamma_{str}^{(s)}$ & $0.0562$ & $0.1563$ & $0.2215$ & $0.2518$ & $0.3125$ & $0.3519$ & $0.3883$ & $0.4234$ & $0.4504$ & $0.4756$ & $0.4993$   \\ \hline\hline
\end{tabular}
\label{tab:strtab3}
\end{table}

%%%%%%%%%%%%%%%%%%%%%%%%%%%%%%%%%%%%%%%%%%%%%%%%%%%%%%%%%%%%%%%%%
\subsection{Special cases}
\label{sec:strthrspecial}
%%%%%%%%%%%%%%%%%%%%%%%%%%%%%%%%%%%%%%%%%%%%%%%%%%%%%%%%%%%%%%%%%

As when we studied the sectional thresholds, in this subsection we briefly note that some of the above computations can be done in a faster, more explicit fashion.

%%%%%%%%%%%%%%%%%%%%%%%%%%%%%%%%%%%%%%%%%%%%%%%%%%%%%%%%%%%%%%%%%
\subsubsection{$q\rightarrow 0$}
\label{sec:strthrspecialq0}
%%%%%%%%%%%%%%%%%%%%%%%%%%%%%%%%%%%%%%%%%%%%%%%%%%%%%%%%%%%%%%%%%

The first case we consider is $q=0$. The curve for that case given in Figure \ref{fig:str} can be obtained in a more direct fashion without all the computations required by Theorem \ref{thm:thmstrthrlq}. Here is a brief sketch how one can proceed. Let
\begin{equation}
\tilde{\h}=[\h_{(1)},\h_{(2)},\dots,\h_{(n)}],\label{eq:defhq0str}
\end{equation}
where
$[\h_{(1)},\h_{(2)},\dots,\h_{(n)}]$ are the absolute values of components of $[\h_{1},\h_{2},\dots,\h_{n}]$ sorted in an increasing order. Then one has
\begin{multline}
\lim_{n\rightarrow\infty}\frac{w_D(\Sstr)}{\sqrt{n}}=\lim_{n\rightarrow\infty}\frac{E\max_{\w\in\Sstr}\h^T\w}{\sqrt{n}}
=\lim_{n\rightarrow\infty}\frac{E\max_{\w\in\Sstr}\sum_{i=1}^{n}\tilde{\h}_i|\w_i|}{\sqrt{n}}\\
=\lim_{n\rightarrow\infty}\frac{E\sqrt{\sum_{i=n-2k+1}^{n}\h_{(i)}^2}}{\sqrt{n}}
\leq \lim_{n\rightarrow\infty}\frac{\sqrt{E\sum_{i=n-2k+1}^{n}\h_{(i)}^2}}{\sqrt{n}}\label{eq:wdlq0str}
\end{multline}
Applying the machinery of \cite{StojnicCSetam09} then gives
\begin{multline}
\lim_{n\rightarrow\infty}\frac{w_D(\Sstr)}{\sqrt{n}}
\leq \lim_{n\rightarrow\infty}\frac{\sqrt{E\sum_{i=n-2k+1}^{n}\h_{(i)}^2}}{\sqrt{n}}\\
=\sqrt{2\beta_{str}^{(0)}+\frac{2}{\sqrt{\pi}}\mbox{erfinv}(1-2\beta_{str}^{(0)})
e^{-\left (\mbox{erfinv}(1-2\beta_{str}^{(0)})\right )^2}}.\label{eq:wdlq0}
\end{multline}
Equalling the quantity on the right hand side with $\sqrt{\alpha}$ then gives the characterization of $\ell_0$ curve in Figure \ref{fig:str}.

%%%%%%%%%%%%%%%%%%%%%%%%%%%%%%%%%%%%%%%%%%%%%%%%%%%%%%%%%%%%%%%%%
\subsubsection{$q=\frac{1}{2}$}
\label{sec:strthrspecialq05}
%%%%%%%%%%%%%%%%%%%%%%%%%%%%%%%%%%%%%%%%%%%%%%%%%%%%%%%%%%%%%%%%%

As when we studied the sectional threshold in the previous section, another special case that allows a further simplification of the results presented in Theorem \ref{thm:thmsecthrlq} is when $q=\frac{1}{2}$. In that case one can apply the strategy that led to (\ref{eq:cubicq05}) to obtain its a strong threshold analogue
\begin{eqnarray}
& & |\h_i|\pm\frac{1}{2}\nu_{str}^{(s)}|\w_i^{(s)}|^{-1/2}-2\gamma_{str}^{(s)}|\w_i^{(s)}|=0\nonumber \\
& \Leftrightarrow & |\h_i|\sqrt{|\w_i^{(s)}|}\pm\frac{1}{2}\nu_{str}^{(s)}-2\gamma_{str}^{(s)}\sqrt{|\w_i^{(s)}|}^{3}=0.\label{eq:cubicq05str}
\end{eqnarray}
This is a cubic equation and can be solved explicitly which of course substantially facilitates the integrations over $\h_i$. Also, as mentioned earlier, similar strategy can be adopted for other rational $q$ but the ``explicit" solutions soon become more complicated than the numerical ones and we skip presenting them.

%%%%%%%%%%%%%%%%%%%%%%%%%%%%%%%%%%%%%%%%%%%%%%%%%%%%%%%%%%%%%%%%%
\section{$\ell_q$-minimization weak threshold}
\label{sec:weakthr}
%%%%%%%%%%%%%%%%%%%%%%%%%%%%%%%%%%%%%%%%%%%%%%%%%%%%%%%%%%%%%%%%%

In this section we assess the performance of $\ell_q$ minimization by looking at its weak thresholds. Before proceeding further, as in the previous section, we slightly readjust the definition of the $\ell_1$ weak thresholds given earlier in the $\ell_1$ minimization context so that it fits the $\ell_q$ case considered here. Namely, one considers a scenario where for any given constant
$\alpha\leq 1$ and \emph{a given fixed} $\tilde{\x}$ in \ref{eq:yrepsystem}
there will be a maximum allowable value of $\beta$ such that the solution of
(\ref{eq:lq}) is that given $\tilde{\x}$ with overwhelming
probability. We will refer to such a $\beta$ as the \emph{weak threshold} and will denote it by $\beta_{weak}^{(q)}$ (we again recall that more on similar definitions of the weak threshold the interested reader can find in e.g. \cite{DonohoPol,StojnicCSetam09}).

%%%%%%%%%%%%%%%%%%%%%%%%%%%%%%%%%%%%%%%%%%%%%%%%%%%%%%%%%%%%%%%%%
\subsection{Weak threshold preliminaries}
\label{sec:weakthrprelim}
%%%%%%%%%%%%%%%%%%%%%%%%%%%%%%%%%%%%%%%%%%%%%%%%%%%%%%%%%%%%%%%%%

Below we will provide a way to quantify behavior of $\beta_{weak}^{(q)}$. As usual we rely on some of the mechanisms presented in \cite{StojnicCSetam09} and some of those presented in Section \ref{sec:secthr}. Along the same lines, we will continue to assume a substantial level of familiarity with many of the well-known results that relate to the performance characterization of (\ref{eq:l1})  and will fairly often recall on many results/definitions that we established in \cite{StojnicCSetam09}. We start by introducing a nice way of characterizing weak success/failure of (\ref{eq:lq}).

\begin{theorem}(A given fixed $\x$)
Assume that an $m\times n$ matrix $A$ is given. Let $\tilde{\x}$ be a $k$-sparse vector and let $\tilde{\x}_1=\tilde{\x}_2=\dots=\tilde{\x}_{n-k}=0$. Further, assume that $\y=A\tilde{\x}$ and that $\w$ is
an $n\times 1$ vector. If
\begin{equation}
(\forall \w\in \textbf{R}^n | A\w=0) \quad  \sum_{i=1}^{n-k}|\w_i|^q+\sum_{i=n-k+1}^n|\tilde{\x}_i+\w_i|^q>\sum_{i=n-k+1}^{n}|\tilde{\x}_{i}|^q
\label{eq:thmeqgenweak1}
\end{equation}
then the solution of (\ref{eq:lq}) obtained for pair $(\y,A)$ is $\tilde{\x}$.
%Moreover, if
%\begin{equation}
%(\exists \w\in \textbf{R}^n | A\w=0) \quad  \sum_{i=n-k+1}^n |\w_i|>\sum_{i=1}^{n-k}|\w_{i}|
%\label{eq:thmeqgenweak2}
%\end{equation}
%then there will be a $k$-sparse $\x$ that satisfies (\ref{eq:system}) and is not the solution of (\ref{eq:l1}).
\label{thm:thmgenweak}
\end{theorem}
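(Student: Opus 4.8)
\textbf{Proof proposal for Theorem \ref{thm:thmgenweak}.}

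The plan is to mimic exactly the contradiction argument used in the proof of Theorem \ref{thm:thmgensec}, adjusted for the fact that here $\tilde{\x}$ is a single \emph{fixed} vector rather than an arbitrary member of a collection, so there is no need to quantify over sign patterns or supports. First I would let $\hatx$ denote the solution of (\ref{eq:lq}) obtained for the pair $(\y,A)$ and assume, for contradiction, that (\ref{eq:thmeqgenweak1}) holds but $\hatx \neq \tilde{\x}$. Since $\y = A\hatx = A\tilde{\x}$, the difference $\w := \hatx - \tilde{\x}$ is a nonzero vector in the null space of $A$, i.e. $A\w = 0$ and $\w \neq 0$. Because $\hatx$ is optimal for (\ref{eq:lq}) we have $\|\hatx\|_q^q \leq \|\tilde{\x}\|_q^q$, that is,
\begin{equation}
\sum_{i=1}^{n} |\tilde{\x}_i + \w_i|^q \leq \sum_{i=1}^{n} |\tilde{\x}_i|^q. \label{eq:weakopt}
\end{equation}

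Next I would split both sides according to the support of $\tilde{\x}$. Using the hypothesis $\tilde{\x}_1 = \dots = \tilde{\x}_{n-k} = 0$, the left-hand side of (\ref{eq:weakopt}) becomes $\sum_{i=1}^{n-k} |\w_i|^q + \sum_{i=n-k+1}^{n} |\tilde{\x}_i + \w_i|^q$, while the right-hand side is simply $\sum_{i=n-k+1}^{n} |\tilde{\x}_i|^q$. Substituting these into (\ref{eq:weakopt}) gives
\begin{equation}
\sum_{i=1}^{n-k} |\w_i|^q + \sum_{i=n-k+1}^{n} |\tilde{\x}_i + \w_i|^q \leq \sum_{i=n-k+1}^{n} |\tilde{\x}_i|^q,
\end{equation}
which directly contradicts (\ref{eq:thmeqgenweak1}) applied to this particular $\w$ (note $\w$ is admissible since $A\w=0$). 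Hence the assumption $\hatx \neq \tilde{\x}$ is untenable, and we conclude $\hatx = \tilde{\x}$, as claimed.

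There is essentially no hard part here — the argument is a one-line rearrangement of the optimality inequality together with the support assumption, and it is the standard null-space-property-style reasoning that the paper has already invoked for the sectional case and attributes to the $\ell_1$ literature (\cite{DH01,FN,LN,Y,XHapp,SPH,DTbern}) and the $\ell_q$ works (\cite{GN03,GN04,GN07,FL08}). The only mild subtlety worth a remark is that, unlike the sectional and strong versions where the triangle-type split produces a clean $|\w_i|^q$ term on the support, in the weak case the support contribution genuinely depends on the interaction between $\tilde{\x}_i$ and $\w_i$ and cannot be simplified further without losing necessity; this is exactly why the condition is stated in the mixed form $\sum |\w_i|^q + \sum |\tilde{\x}_i + \w_i|^q > \sum |\tilde{\x}_i|^q$. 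As with the earlier theorems, one can also note in a remark that the stated condition is not only sufficient but also necessary for $\ell_q$ to recover this fixed $\tilde{\x}$, by reversing the implication.
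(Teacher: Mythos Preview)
Your argument is correct and matches the paper's own proof essentially line for line: assume a distinct optimizer $\hatx$, write $\w=\hatx-\tilde{\x}$ in the null space of $A$, invoke optimality to get $\sum_i|\tilde{\x}_i+\w_i|^q\leq\sum_i|\tilde{\x}_i|^q$, split according to the support of $\tilde{\x}$, and derive a contradiction with (\ref{eq:thmeqgenweak1}). The paper likewise flags this as nothing more than the standard $\ell_1$ null-space argument carried over verbatim to general $q$.
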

\begin{proof}
The proof follows directly from the corresponding results for $\ell_1$ (see, e.g. Theorem $2$ in \cite{StojnicICASSP09} and references therein). For the completeness we just sketch the argument again. Let $\hatx$ be the solution of (\ref{eq:lq}). We want to show that if (\ref{eq:thmeqgenweak1}) holds then $\hatx=\tilde{\x}$. To that end assume opposite, i.e. assume that (\ref{eq:thmeqgenweak1}) holds but $\hatx\neq\tilde{\x}$. Then since $\y=A\hatx$ and $\y=A\tilde{\x}$ one must have $\hatx =\tilde{\x}+\w$ with $\w$ such that $A\w=0$. Also, since $\hatx$ is the solution of (\ref{eq:lq}) one has that
\begin{equation}
\sum_{i=1}^{n-k}|\w_i|^q+\sum_{i=n-k+1}^n|\tilde{\x}_i+\w_i|^q=\sum_{i=1}^n|\tilde{\x}_i+\w_i|^q\leq \sum_{i=1}^{n}|\tilde{\x}_i|^q=\sum_{i=n-k+1}^{n}|\tilde{\x}_i|^q.\label{eq:absvalweak}
\end{equation}
Clearly, (\ref{eq:absvalweak}) contradicts (\ref{eq:thmeqgenweak1}) and $\hatx\neq\tilde{\x}$ can not hold. Therefore $\hatx=\tilde{\x}$ which is exactly what the theorem claims.
\end{proof}

\noindent \textbf{Remark:} As earlier, the above proof is nothing original. It simply follows the well known arguments for $\ell_1$ case.

We then, following the methodology of \cite{StojnicCSetam09},
start by defining a set $\Sweak$
\begin{equation}
\Sweak(\tilde{\x})=\{\w\in S^{n-1}| \quad \sum_{i=n-k+1}^n |\tilde{\x}_i|^q\geq \sum_{i=1}^{n-k}|\w_i|^q+\sum_{i=n-k+1}^n|\tilde{\x}_i+\w_i|^q\},\label{eq:defSweak}
\end{equation}
where $S^{n-1}$ is the unit sphere in $R^n$. To continue following methodology of \cite{StojnicCSetam09} we will utilize the following slight modification of Theorem \ref{thm:Gordonmesh}.
\begin{theorem}(\cite{Gordon88} Escape through a mesh)
\label{thm:Gordonmesh} Let $S(\x)$ be a collection of subsets of the unit Euclidean
sphere $S^{n-1}$ in $R^{n}$ indexed by a collection of vectors $\x$. Let $Y$ be a random
$(n-m)$-dimensional subspace of $R^{n}$, spanned by $(n-m)$ vectors from $R^n$ with i.i.d. standard normal components. Let
\begin{eqnarray}
w_D(S(\x)) & = & E\sup_{\w\in S} (\h^T\w)\nonumber \\
\max_{\x}w_D(S(\x)) & = & \max_{\x}E\sup_{\w\in S} (\h^T\w) \label{eq:widthdefweak}
\end{eqnarray}
where $\h$ is a random column vector in $R^{n}$ with i.i.d. standard normal components. Assume that
$\max_{\x}w_D(S(\x))<\left ( \sqrt{m}-\frac{1}{4\sqrt{m}}\right )$. Select a subset of $S(\x)$, say $S(\x^{(i)})$. Then
\begin{equation}
P(Y\cap S(\x^{(i)})=0)>1-3.5e^{-\frac{\left (
\sqrt{m}-\frac{1}{4\sqrt{m}}-\max_{\x}(w_D(S(\x^{(i)}))) \right ) ^2}{18}}.
\label{eq:thmesh}
\end{equation}
\end{theorem}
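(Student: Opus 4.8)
The plan is to reduce the statement to the classical single-subset escape-through-a-mesh theorem of Gordon recalled above, via a monotonicity argument; no new probabilistic input is needed. First I would fix the index $i$, so that $\x^{(i)}$ is a deterministic vector chosen \emph{before} the random subspace $Y$ is drawn, and abbreviate $S_i=S(\x^{(i)})$, a fixed subset of the unit sphere $S^{n-1}$ in $R^{n}$. The original version of the theorem then applies verbatim to $S_i$: provided $w_D(S_i)<\left(\sqrt{m}-\frac{1}{4\sqrt{m}}\right)$, one has
\[
P(Y\cap S_i=0)>1-3.5e^{-\frac{\left(\sqrt{m}-\frac{1}{4\sqrt{m}}-w_D(S_i)\right)^2}{18}}.
\]
So the only things to verify are that this hypothesis on $w_D(S_i)$ is met and that the displayed bound can be relaxed to the one in the statement.

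For the hypothesis, note that by the definitions in (\ref{eq:widthdefweak}) we have $w_D(S_i)=E\sup_{\w\in S_i}(\h^T\w)\leq \max_{\x}E\sup_{\w\in S(\x)}(\h^T\w)=\max_{\x}w_D(S(\x))$, and the right-hand side is assumed to be strictly below $\sqrt{m}-\frac{1}{4\sqrt{m}}$; hence Gordon's theorem is indeed applicable to $S_i$. For the bound, set $c=\sqrt{m}-\frac{1}{4\sqrt{m}}$ and observe that $t\mapsto 1-3.5e^{-\frac{(c-t)^2}{18}}$ is strictly increasing on $(-\infty,c)$, so replacing $w_D(S_i)$ by the larger quantity $\max_{\x}w_D(S(\x))$ (which still lies below $c$) only decreases the right-hand side. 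Combining the two observations yields exactly
\[
P(Y\cap S(\x^{(i)})=0)>1-3.5e^{-\frac{\left(\sqrt{m}-\frac{1}{4\sqrt{m}}-\max_{\x}w_D(S(\x))\right)^2}{18}},
\]
which is the claim.

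The content here is purely organizational, so there is no real ``hard part''; the one subtlety that must be respected is that $i$ — and therefore $\x^{(i)}$ — has to be fixed independently of $Y$, i.e. one is not permitted to select the subset adaptively after seeing the random subspace, since otherwise the tail estimate for a single fixed set no longer applies. The role of the uniform quantity $\max_{\x}w_D(S(\x))$ in the exponent is precisely to furnish a bound that holds no matter which of the (deterministically indexed) subsets is later singled out. If one instead wished the escape event to hold simultaneously over a family of indices $i$, an additional union bound over that family would be inserted at the end; but as stated the theorem concerns a single selected subset, so that step is unnecessary.
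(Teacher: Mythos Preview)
Your reduction to the single-subset Gordon theorem is exactly what the paper does --- its own proof is the one-line remark that this is ``a trivial extension of the Gordon's original proof.'' One small slip: the map $t\mapsto 1-3.5e^{-(c-t)^2/18}$ is strictly \emph{decreasing}, not increasing, on $(-\infty,c)$; that is in fact the direction you need, since replacing $w_D(S_i)$ by the larger quantity $\max_{\x}w_D(S(\x))$ then correctly lowers the right-hand side, as you go on to conclude.
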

\begin{proof}
It is a trivial extension of the Gordon's original proof of Theorem \ref{thm:Gordonmesh}.
\end{proof}

The methodology of \cite{StojnicCSetam09} then proceeds by characterizing
\begin{equation}
\max_{\tilde{\x}}w_D(\Sweak(\tilde{\x}))=\max_{\tilde{\x}}E\max_{\w\in \Sweak(\tilde{\x})} (\h^T\w),\label{eq:negham1}
\end{equation}
where to facilitate the exposition we, as earlier, replace $\sup$ with a $\max$. Below we present a way to create an upper-bound on $w_D(\Sweak(\tilde{\x}))$. Equalling such an upper bound with $\sqrt{m}$ would be roughly enough to provide a characterization of the weak thresholds.

%%%%%%%%%%%%%%%%%%%%%%%%%%%%%%%%%%%%%%%%%%%%%%%%%%%%%%%%%%%%%%%%%
\subsection{Weak threshold computation}
\label{sec:weakthrcomp}
%%%%%%%%%%%%%%%%%%%%%%%%%%%%%%%%%%%%%%%%%%%%%%%%%%%%%%%%%%%%%%%%%

We recall that $f(\w)=\h^T\w$ and
we start with the following line of identities
\begin{multline}
\hspace{-.5in}\max_{\w\in\Sweak(\tilde{\x})}f(\w)=-\min_{\w\in\Sweak(\tilde{\x})} -\h^T\w\\
\hspace{-.7in}=-\min_{\w}\max_{\gamma_{weak}\geq 0,\nu_{weak}\geq 0} -\h^T\w+
\nu_{weak}\sum_{i=n-k+1}^{n}|\tilde{\x}_i+\w_i|^q
+\nu_{weak}\sum_{i=1}^{n-k}|\w_i|^q-\nu_{weak}\sum_{i=n-k+1}^{n}|\tilde{\x}_i|^q+\gamma_{weak}\sum_{i=1}^{n}\w_i^2-\gamma_{weak}\\
\hspace{-.5in}\leq -\max_{\gamma_{weak}\geq 0,\nu_{weak}\geq 0}\min_{\w} -\h^T\w
+\nu_{weak}\sum_{i=n-k+1}^{n}|\x_i+\w_i|^q
+\nu_{weak}\sum_{i=1}^{n-k}|\w_i|^q-\nu_{weak}\sum_{i=n-k+1}^{n}|\x_i|^q+\gamma_{weak}\sum_{i=1}^{n}\w_i^2-\gamma_{weak}\\
\hspace{-.5in}= \min_{\gamma_{weak}\geq 0,\nu_{weak}\geq 0}\max_{\w} \h^T\w
-\nu_{weak}\sum_{i=n-k+1}^{n}|\x_i+\w_i|^q
-\nu_{weak}\sum_{i=1}^{n-k}|\w_i|^q+\nu_{weak}\sum_{i=n-k+1}^{n}|\x_i|^q-\gamma_{weak}\sum_{i=1}^{n}\w_i^2+\gamma_{weak}\\
\hspace{-.7in}=\min_{\gamma_{weak}\geq 0,\nu_{weak}\geq 0}\max_{\w} \sum_{i=n-k+1}^{n}(\h_i\w_i-\nu_{weak}|\tilde{\x}_i+\w_i|^q+\nu_{weak}|\tilde{\x}_i|^q-\gamma_{weak}\w_i^2)
+\sum_{i=1}^{n-k}(\h_i|\w_i|-\nu_{weak}|\w_i|^q-\gamma_{weak}\w_i^2)+\gamma_{weak}\\
=\min_{\gamma_{weak}\geq 0,\nu_{weak}\geq 0} f_3(q,\h,\nu_{weak},\gamma_{weak},\beta)+\gamma_{weak},\label{eq:weakeq1}
%=\max_{\gamma_{weak}\geq 0}(-\frac{\|\g\|_2^2}{4\gamma_{sph}}-\gamma_{sph})
\end{multline}
where
\begin{multline}
\hspace{0in}f_3(q,\h,\nu_{weak},\gamma_{weak},\tilde{\x},\beta)=\max_{\w} (\sum_{i=n-k+1}^{n}(\h_i\w_i-\nu_{weak}|\tilde{\x}_i+\w_i|^q+\nu_{weak}|\tilde{\x}_i|^q-\gamma_{weak}\w_i^2)\\
+\sum_{i=1}^{n-k}(\h_i|\w_i|-\nu_{weak}|\w_i|^q-\gamma_{weak}\w_i^2)).\label{eq:deff1weak}
\end{multline}
One then has
\begin{multline}
\hspace{0in}\max_{\tilde{\x}}w_D(\Sweak(\tilde{\x}))=\max_{\tilde{\x}}E\max_{\w\in\Sweak(\tilde{\x})}\h^T\w
=\max_{\tilde{\x}}E\max_{\w\in\Sweak(\tilde{\x})}f(\w)=\\
\hspace{-.5in}\max_{\tilde{\x}}E\min_{\gamma_{weak}\geq 0,\nu_{weak}\geq 0} f_3(q,\h,\nu_{weak},\gamma_{weak},\tilde{\x},\beta)+\gamma_{weak}
\leq \max_{\tilde{\x}}\min_{\gamma_{weak}\geq 0,\nu_{weak}\geq 0} E f_3(q,\h,\nu_{weak},\gamma_{weak},\tilde{\x},\beta)+\gamma_{weak}.\label{eq:wdineqweak}
\end{multline}
Now if one sets $\w_{i}=\frac{\w_{i}^{(s)}}{\sqrt{n}}$, $\gamma_{weak}=\gamma_{weak}^{(s)}\sqrt{n}$, and $\nu_{weak}=\nu_{weak}^{(s)}\sqrt{n}^{q-1}$ (where $\w_{i}^{(s)}$, $\gamma_{weak}^{(s)}$, and $\nu_{weak}^{(s)}$ are independent of $n$) then (\ref{eq:wdineq}) gives
\begin{multline}
\lim_{n\rightarrow\infty}\frac{\max_{\tilde{\x}}w_D(\Sweak(\tilde{\x}))}{\sqrt{n}}=\lim_{n\rightarrow\infty}
\frac{\max_{\tilde{\x}}E\max_{\w\in\Sweak(\tilde{\x})}\h^T\w}{\sqrt{n}}
=\lim_{n\rightarrow\infty}\frac{\max_{\tilde{\x}}E\max_{\w\in\Sweak(\tilde{\x})}f(\w)}{\sqrt{n}}\\
\hspace{-.3in}=\lim_{n\rightarrow\infty}\frac{\max_{\tilde{\x}}E\min_{\gamma_{weak}\geq 0,\nu_{weak}\geq 0} (f_3(q,\h,\nu_{weak},\gamma_{weak},\tilde{\x},\beta)+\gamma_{weak})}{\sqrt{n}}\\
\leq \lim_{n\rightarrow\infty}\frac{\max_{\tilde{\x}}\min_{\gamma_{weak}\geq 0,\nu_{weak}\geq 0} (E f_3(q,\h,\nu_{weak},\gamma_{weak},\tilde{\x},\beta)+\gamma_{weak})}{\sqrt{n}}\\
\hspace{-.5in}=\max_{\tilde{\x}_i,i>n-k}\min_{\gamma_{weak}^{(s)}\geq 0,\nu_{weak}^{(s)}\geq 0}  ((\beta E\max_{\w_i^{(s)}}(\h_i\w_i^{(s)}+\nu_{weak}^{(s)}|\tilde{\x}_i+\w_i^{(s)}|^q-\nu_{weak}^{(s)}|\tilde{\x}_i|^q-\gamma_{weak}^{(s)}(\w_i^{(s)})^2)\\
+(1-\beta)E\max_{\w_j^{(s)}}(|\h_j||\w_j^{(s)}|-\nu_{weak}^{(s)}|\w_j^{(s)}|^q-\gamma_{weak}^{(s)}(\w_j^{(s)})^2) )+\gamma_{weak}^{(s)} )\\
=\max_{\tilde{\x}_i,i>n-k}\min_{\gamma_{weak}^{(s)}\geq 0,\nu_{weak}^{(s)}\geq 0} \left (\left (\beta I_{weak}^{(1)}
+(1-\beta)I_{weak}^{(2)}\right )+\gamma_{weak}^{(s)}\right ),\label{eq:wdineq1weak}
\end{multline}
where
\begin{eqnarray}
I_{weak}^{(1)} & = & E\max_{\w_i^{(s)}}(\h_i\w_i^{(s)}+\nu_{weak}^{(s)}|\tilde{\x}_i+\w_i^{(s)}|^q-\nu_{weak}^{(s)}|\tilde{\x}_i|^q-\gamma_{weak}^{(s)}(\w_i^{(s)})^2)\nonumber \\
I_{weak}^{(2)} & = & E\max_{\w_j^{(s)}}(|\h_j||\w_j^{(s)}|-\nu_{weak}^{(s)}|\w_j^{(s)}|^q-\gamma_{weak}^{(s)}(\w_j^{(s)})^2).\label{eq:defI1I2weak}
\end{eqnarray}
We summarize the above results related to the weak threshold ($\beta_{weak}^{(q)}$) in the following theorem.

\begin{theorem}(Weak threshold - lower bound)
Let $A$ be an $m\times n$ measurement matrix in (\ref{eq:system})
with i.i.d. standard normal components. Let $\tilde{\x}\in R^n$ be a $k$-sparse vector for which $\tilde{\x}_1=0,\tilde{\x}_2=0,,\dots,\tilde{\x}_{n-k}=0$ and let  $\y=A\tilde{\x}$. Let $k,m,n$ be large
and let $\alpha=\frac{m}{n}$ and $\betaweak^{(q)}=\frac{k}{n}$ be constants
independent of $m$ and $n$. Let
\begin{eqnarray}
I_{weak}^{(1)} & = & E\max_{\w_i^{(s)}}(\h_i\w_i^{(s)}+\nu_{weak}^{(s)}|\tilde{\x}_i+\w_i^{(s)}|^q-\nu_{weak}^{(s)}|\tilde{\x}_i|^q-\gamma_{weak}^{(s)}(\w_i^{(s)})^2)\nonumber \\
I_{weak}^{(2)} & = & E\max_{\w_j^{(s)}}(|\h_j||\w_j^{(s)}|-\nu_{weak}^{(s)}|\w_j^{(s)}|^q-\gamma_{weak}^{(s)}(\w_j^{(s)})^2).\label{eq:defI1I2weakthm}
\end{eqnarray}
If $\alpha$ and $\betaweak^{(q)}$ are such that
\begin{equation}
\max_{\tilde{\x}_i,i>n-k}\min_{\gamma_{weak}^{(s)}\geq 0,\nu_{weak}^{(s)}\geq 0} \left (\left (\beta_{weak}^{(q)} I_{weak}^{(1)}
+(1-\beta_{weak}^{(q)})I_{weak}^{(2)}\right )+\gamma_{weak}^{(s)}\right )<\sqrt{\alpha},\label{eq:weakcondthmweak}
\end{equation}
then with overwhelming probability the solution of (\ref{eq:lq}) obtained for pair $(\y,A)$ is $\tilde{\x}$.\label{thm:thmweakthrlq}
\end{theorem}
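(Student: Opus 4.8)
The plan is to assemble the theorem directly from the pieces already laid out in Section~\ref{sec:weakthr}, in exactly the same way that Theorem~\ref{thm:thmsecthrlq} was assembled from the sectional-threshold computation. First I would invoke Theorem~\ref{thm:thmgenweak}: if every $\w$ in the null space of $A$ satisfies the inequality (\ref{eq:thmeqgenweak1}), then the solution of (\ref{eq:lq}) for the pair $(\y,A)$ is the prescribed $\tilde{\x}$. Negating this, failure can occur only if the null space of $A$ intersects the set $\Sweak(\tilde{\x})$ defined in (\ref{eq:defSweak}) nontrivially (after scaling to the sphere). Since $A$ has i.i.d.\ standard normal entries, its null space is a uniformly random $(n-m)$-dimensional subspace, so Theorem~\ref{thm:Gordonmesh} (the escape-through-a-mesh version stated for an indexed family of sets) applies: provided $\max_{\tilde{\x}} w_D(\Sweak(\tilde{\x})) < \sqrt{m} - \tfrac{1}{4\sqrt{m}}$, the intersection is trivial with probability at least $1 - 3.5 e^{-(\sqrt{m}-\frac{1}{4\sqrt{m}} - \max_{\tilde{\x}} w_D(\Sweak(\tilde{\x})))^2/18}$, which is overwhelming in the linear regime $m=\alpha n$.

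Next I would import the chain of inequalities (\ref{eq:weakeq1})--(\ref{eq:wdineq1weak}) to bound the Gaussian width. The key moves there are: Lagrangian duality to move the defining constraint of $\Sweak(\tilde{\x})$ into the objective (introducing multipliers $\nu_{weak}\ge 0$ for the $\ell_q$ constraint and $\gamma_{weak}\ge 0$ for the sphere constraint), the min-max swap that only costs an inequality in the right direction (upper bound), separability of the resulting objective across coordinates, pushing the expectation inside the minimum over the dual variables (again an inequality in the right direction, by Jensen/concavity of $\min$), and finally the scaling substitution $\w_i = \w_i^{(s)}/\sqrt n$, $\gamma_{weak}=\gamma_{weak}^{(s)}\sqrt n$, $\nu_{weak}=\nu_{weak}^{(s)}\sqrt n^{\,q-1}$ that renders everything dimension-free and produces the limit
\begin{equation*}
\lim_{n\to\infty}\frac{\max_{\tilde{\x}} w_D(\Sweak(\tilde{\x}))}{\sqrt n}
\le \max_{\tilde{\x}_i,\,i>n-k}\ \min_{\gamma_{weak}^{(s)}\ge 0,\ \nu_{weak}^{(s)}\ge 0}\Big(\big(\beta_{weak}^{(q)} I_{weak}^{(1)} + (1-\beta_{weak}^{(q)}) I_{weak}^{(2)}\big) + \gamma_{weak}^{(s)}\Big),
\end{equation*}
with $I_{weak}^{(1)}, I_{weak}^{(2)}$ as in (\ref{eq:defI1I2weakthm}); here the split into a fraction $\beta_{weak}^{(q)}$ of ``signal'' coordinates (carrying the $+\nu|\tilde{\x}_i+\w_i|^q - \nu|\tilde{\x}_i|^q$ term) and $1-\beta_{weak}^{(q)}$ of ``noise'' coordinates comes from the law of large numbers over the i.i.d.\ $\h_i$. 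Combining this with the hypothesis (\ref{eq:weakcondthmweak}), which says precisely that this limiting quantity is strictly less than $\sqrt\alpha$, gives $\max_{\tilde{\x}} w_D(\Sweak(\tilde{\x})) < (1-\epsilon)\sqrt m$ for large $n$, so Gordon's condition is met with room to spare and the escape probability is overwhelming. Reversing back through Theorem~\ref{thm:thmgenweak} yields the claim.

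The one genuinely delicate point, and the step I would flag as the main obstacle, is the treatment of the $\max_{\tilde{\x}}$ over the fixed (nonzero) entries of $\tilde{\x}$. Unlike the sectional and strong cases, the set $\Sweak(\tilde{\x})$ genuinely depends on the magnitudes $\tilde{\x}_i$ on the support, so the per-coordinate ``signal'' integral $I_{weak}^{(1)}$ involves both the dual variables and the $\tilde{\x}_i$ values; one must argue that the outer maximization over these coordinates interacts correctly with Gordon's theorem, which is why the excerpt invokes the modified, family-indexed version of Theorem~\ref{thm:Gordonmesh} — the bound on the width must hold uniformly over the index, and $\max_{\tilde{\x}} w_D$ is exactly the uniform quantity needed. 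A secondary technical nuisance is justifying that the various quantities concentrate (so that the limit of expectations captures the typical behavior) and that the interchange of limit, expectation, and the min over dual variables is legitimate; these are handled by the same concentration machinery of \cite{StojnicCSetam09} already relied upon throughout the paper, so in the write-up I would simply cite that and say ``follows from the above discussion,'' exactly as the statement's proof environment does.
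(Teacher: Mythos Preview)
Your proposal is correct and follows essentially the same approach as the paper: the paper's proof consists of the single line ``Follows from the above discussion,'' and that discussion is precisely the chain you describe --- Theorem~\ref{thm:thmgenweak} to reduce to null-space geometry, the indexed-family version of Gordon's escape theorem applied to $\Sweak(\tilde{\x})$, and then the Lagrangian/min-max/scaling computation (\ref{eq:weakeq1})--(\ref{eq:wdineq1weak}) bounding $\max_{\tilde{\x}} w_D(\Sweak(\tilde{\x}))/\sqrt{n}$ by the quantity in (\ref{eq:weakcondthmweak}). Your identification of the $\max_{\tilde{\x}}$ step as the genuinely new wrinkle relative to the sectional case, and of the indexed-family Gordon theorem as the device that handles it, matches the paper's treatment exactly.
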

\begin{proof}
Follows from the above discussion.
\end{proof}

The results for the weak threshold obtained from the above theorem
are presented in Figure \ref{fig:weak}. To be a bit more specific, we selected three different values of $q$, namely $q\in\{0,0.3,0.5\}$ in addition to standard $q=1$ case already discussed in \cite{StojnicCSetam09} (we skipped the $q=0.1$ case that we considered in earlier sections since now one has an extra optimization to perform and when $q$ is small additional precision/computaion time may be needed to obtain valid results).
%; however, to maintain the completeness we present them as well but do mention that one should take them cautiously; in fact, just looking at plot in Figure \ref{fig:weak}, on e can observe a slightly jittery shape of the green curve which already indicates that it is a bit more challenging numerically to handle small $q$ cases).
\begin{figure}[htb]
%%%%%\begin{minipage}[b]{1.0\linewidth}
\centering
\centerline{\epsfig{figure=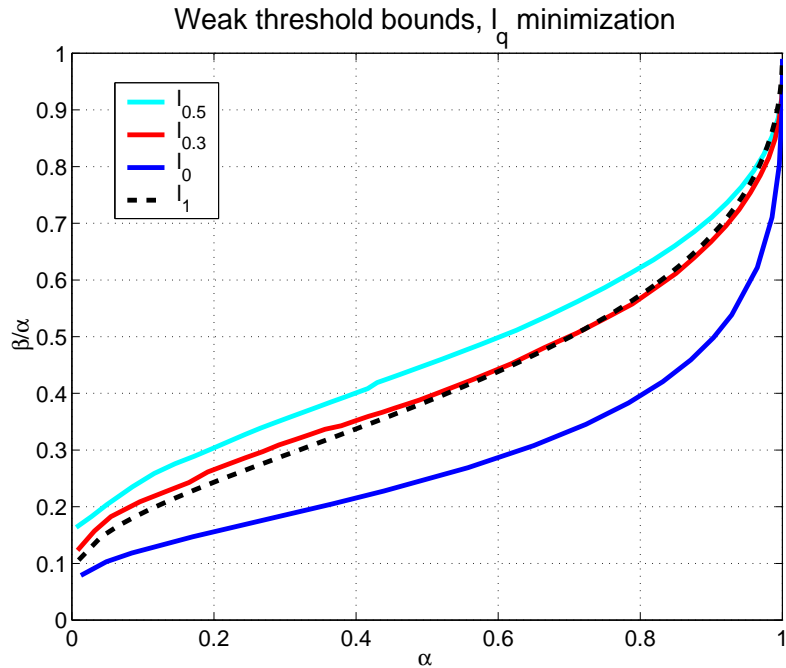,width=10.5cm,height=9cm}}
%%%%%%\end{minipage}
\caption{\emph{Weak} threshold, $\ell_q$-optimization}
\label{fig:weak}
\end{figure}
As can be seen from Figure \ref{fig:weak}, for some values of $q$ the results are better than for $q=1$. However, as was the case when we studied the sectional thresholds, for some $q$'s and some ranges of $\alpha$ the results are worse. Of course one again has to be careful how to interpret this. As was the case when we studied the sectional thresholds, one may naturally expect that as $q$ goes down the threshold results become better, i.e. the resulting curves go up. That again does happen down to some values for $q$; however, after that the curves start sliding down and eventually for $q=0$ we actually have a curve that is even below $q=1$ case. Of course this again just shows that our methodology works successfully to a degree, i.e. its a lower-bounding tendency eventually comes into a full effect. Of course, as earlier, if one is interested in the best possible weak threshold values for any $q$ rather than the methodology itself the curves that go down as $q$ goes up could be ignored. However, we again kept them on the plot to emphasize that the proposed methodology has some inherent deficiencies.

Also, as almost all other results we presented so far, the results we presented in Figure \ref{fig:weak} are obtained after numerical computations. They mostly included numerical optimizations which were all (except maximization over $\w$) done on a local optimum level. We do not know how (if in any way) solving them on a global optimum level would affect the location of the plotted curves. Also, as earlier, numerical integrations were done on a finite precision level as well which could have potentially harmed the final results as well. Still, we believe that the methodology can not achieve substantially more than what we presented in Figure \ref{fig:weak} (and hopefully is not severely degraded with numerical integrations and maximization over $\w$ and $\tilde{\x}_i$).

It is important to emphasize that as in the case when we studied the sectional thresholds in Section \ref{sec:secthr}, solving over $\nu_{weak}^{(s)}$ and $\gamma_{weak}^{(s)}$ on a local optimum level may lower the curves but it
certainly does not jeopardize their lower bounding rigorousness. However, solving the maximization over $\w$ even on a global optimum level as we did, may do so. Moreover, one now also has to solve maximization over $\tilde{\x}_i$ on a global optimum level. We have not done so and it is possible that such an imprecision made curves be higher than they really are. Since this may jeopardize the lower bounding rigorousness
in addition to plots in Figure \ref{fig:weak} we again present in Tables \ref{tab:weaktab1} and \ref{tab:weaktab2} the concrete values we obtained for $\nu_{weak}^{(s)}$,
$\gamma_{weak}^{(s)}$, and $\tilde{\x}_i$ for certain $\beta_{weak}^{(q)}$ on the way to computing corresponding $\alpha$ (as indicated above the tables, Table \ref{tab:weaktab1} contains data for $\ell_q, q=0.5$ and Table \ref{tab:weaktab2} contains data for $\ell_q, q=0.3$). That way the
interested reader can double check if the optimization over $\w$ in any way endangered the lower-bounding rigorousness.
Of course, as mentioned on a couple of occasions earlier, we do reemphasize that the results presented in the above theorem are completely rigorous,
it is just that some of the numerical work that we performed could have been a bit imprecise. Also, as earlier, we firmly believe that all the numerical work with the exception of optimization over $\tilde{\x}_i$ did not make any substantial imprecisions. When it comes to optimization over $\tilde{\x}_i$, such an optimization is not that hard to implement (if needed) even as a variant of the exhaustive search. However, solving it numerically would require a bit more computational time and we opted for potentially suboptimal local search. We do emphasize though, that with a bit more time available it should not be that much of a problem to double check if our potential sub-optimality in any way endangered the rigorousness of the presented plots. We believe that it is not case but have not done a complete exhaustive search to confirm such a belief. Although it is not much of a guarantee for anything, we do mention that the curve we obtained for $q=0$ closely matches the one that can be obtained when performing the exact optimizations and integrations (when $q=0$ these are a bit involved but as mentioned below possible). In other words, apart from standard finite precision problems one unavoidably has the blue curve in Figure \ref{fig:weak} is roughly speaking where it really should be. Of course, that is not of much use since this curve is anyway below the $\ell_1$. However, as we just mentioned, it may be used as an indication that even when it comes to $q=0.5$ and $q=0.3$ maybe our numerical work is not that much off (if at all).

%%%%%%%%%%%%%%%%%%%%%%%%%%%%%%%%%%%%%%%%%%%%%%%%%%%%%%%%%%%%%%%%%
\subsection{Special cases}
\label{sec:weakthrspecial}
%%%%%%%%%%%%%%%%%%%%%%%%%%%%%%%%%%%%%%%%%%%%%%%%%%%%%%%%%%%%%%%%%

One can again create a substantial simplification of results given in Theorem \ref{thm:thmgenweak} for certain values of $q$. For example, for $q=0$ or $q=1/2$ one can follow the strategy of previous sections and simplify some of the computations. However, such results (while simpler than those from Theorem \ref{thm:thmgenweak}) are still not very simple. Moreover, since for $q=0$ the results one eventually obtains are not even better than the well known ones for $\ell_1$ we skip presenting these simplifications.

\begin{table}%[t]
\caption{Weak threshold bounds $\ell_q,q=0.5$}\vspace{0in}
\centering\hspace{-.3in}
\begin{tabular}{||c|c|c|c|c|c|c|c|c|c|c|c||}\hline\hline
$\beta_{weak}^{(q)}$  & $0.0050$ &  $0.0200$ & $0.0600$ & $0.1100$ & $0.1600$ & $0.2400$ & $0.3200$ & $0.4000$ & $0.5200$ & $0.6400$ & $0.9200$  \\ \hline\hline
$\alpha$              & $0.0274$ &  $0.0851$ & $0.1981$ & $0.3071$ & $0.3995$ & $0.5212$ & $0.6257$ & $0.7117$ & $0.8185$ & $0.9006$ & $0.9990$  \\ \hline
$\tilde{\x}_i$        & $7.4176$ &  $4.5521$ & $2.5595$ & $2.0168$ & $1.7742$ & $1.3513$ & $1.2865$ & $1.2250$ & $1.2925$ & $1.3074$ & $1.6199$  \\ \hline
$\nu_{weak}^{(s)}$    & $6.7123$ &  $3.8539$ & $2.4321$ & $1.7927$ & $1.4733$ & $1.1500$ & $0.9033$ & $0.7319$ & $0.5583$ & $0.3900$ & $0.0442$  \\ \hline
$\gamma_{weak}^{(s)}$ & $0.0830$ &  $0.1505$ & $0.2212$ & $0.2812$ & $0.3178$ & $0.3470$ & $0.3931$ & $0.4206$ & $0.4535$ & $0.4777$ & $0.5018$  \\ \hline\hline
\end{tabular}
\label{tab:weaktab1}
\end{table}

\begin{table}%[t]
\caption{Weak threshold bounds $\ell_q,q=0.3$}\vspace{0in}
\centering\hspace{-.3in}
\begin{tabular}{||c|c|c|c|c|c|c|c|c|c|c|c||}\hline\hline
$\beta_{weak}^{(q)}$  & $0.0010$ &  $0.0200$ & $0.0500$ & $0.0900$ & $0.1400$ & $0.2000$ & $0.2800$ & $0.3600$ & $0.4400$ & $0.6000$ & $0.9200$  \\ \hline\hline
$\alpha$              & $0.0081$ &  $0.0958$ & $0.1913$ & $0.2914$ & $0.3985$ & $0.5054$ & $0.6188$ & $0.7110$ & $0.7889$ & $0.8993$ & $0.9991$  \\ \hline
$\tilde{\x}_i$        & $9.7741$ &  $2.9006$ & $1.6855$ & $1.5349$ & $0.8705$ & $0.8734$ & $0.8656$ & $0.9196$ & $0.8888$ & $0.9157$ & $1.4012$  \\ \hline
$\nu_{weak}^{(s)}$    & $22.565$ &  $5.4895$ & $3.3171$ & $2.3060$ & $1.7590$ & $1.3475$ & $0.9736$ & $0.7632$ & $0.5694$ & $0.3784$ & $0.0368$  \\ \hline
$\gamma_{weak}^{(s)}$ & $0.0442$ &  $0.1519$ & $0.2213$ & $0.2785$ & $0.3080$ & $0.3436$ & $0.3885$ & $0.4258$ & $0.4436$ & $0.4737$ & $0.5006$  \\ \hline\hline
\end{tabular}
\label{tab:weaktab2}
\end{table}

\section{Conclusion}
\label{sec:conc}
%%%%%%%%%%%%%%%%%%%%%%%%%%%%%%%%%%%%%%%%%%%%%%%%%%%%%%%%%%%%%%%%%%%%%%%%%%%%%%%%

In this paper we looked at classical under-determined linear systems with sparse solutions. We analyzed a particular optimization technique called $\ell_q$ optimization. While its a convex counterpart $\ell_1$ technique is known to work well often it is a much harder task to determine if
$\ell_q$ exhibits a similar or better behavior; and especially if it exhibits a better behavior how much better quantitatively it is. We made some sort of progress in this direction in this paper. Namely, we showed that in many cases the $\ell_q$ would provide stronger guarantees than $\ell_1$ and in many other ones we provided bounds that are better than the ones we could provide for $\ell_1$. Of course, having better bounds does not guarantee that the performance is better as well but in our view serves as a solid indication that overall, $\ell_q,q<1$, should work better than $\ell_1$.

To be a bit more specific, in this paper we looked at sectional, strong, and weak thresholds of the $\ell_q$ optimization. We created a mechanism that can help provide lower bounds on all of these thresholds. The methodology is especially valuable since the underlying problems are non-convex and some of them actually highly combinatorial. That makes them incredibly hard to analyze. However, using the powerful methodology we recently developed \cite{StojnicCSetam09} we were able to attack all these problems and make a substantial progress in characterizing the thresholds they eventually produce.

Of course, much more can be done, including generalizations of the presented concepts to many other variants of these problems. The examples include various different unknown vector structures (a priori known to be positive vectors, block-sparse, binary/box constrained vectors etc.), various noisy versions (approximately sparse vectors, noisy measurements $\y$), low rank matrices, vectors with partially known support and many others. We will present some of these applications in a few forthcoming papers.

%\newpage1
%\setcounter{page}{1}
\begin{singlespace}
\bibliographystyle{plain}
\bibliography{LqThrBndsRefs}

\begin{thebibliography}{10}

\bibitem{ALPTJ09}
R.~Adamczak, A.~E. Litvak, A.~Pajor, and N.~Tomczak-Jaegermann.
\newblock Restricted isometry property of matrices with independent columns and
  neighborly polytopes by random sampling.
\newblock {\em Preprint}, 2009.
\newblock available at arXiv:0904.4723.

\bibitem{AS}
F.~Afentranger and R.~Schneider.
\newblock Random projections of regular simplices.
\newblock {\em Discrete Comput. Geom.}, 7(3):219--226, 1992.

\bibitem{Bar}
R.~Baraniuk, M.~Davenport, R.~DeVore, and M.~Wakin.
\newblock A simple proof of the restricted isometry property for random
  matrices.
\newblock {\em Constructive Approximation}, 28(3), 2008.

\bibitem{BayMon10}
M.~Bayati and A.~Montanari.
\newblock The dynamics of message passing on dense graphs, with applications to
  compressed sensing.
\newblock {\em Preprint}.
\newblock available online at arXiv:1001.3448.

\bibitem{BorockyHenk}
K.~Borocky and M.~Henk.
\newblock Random projections of regular polytopes.
\newblock {\em Arch. Math. (Basel)}, 73(6):465--473, 1999.

\bibitem{Crip}
E.~Candes.
\newblock The restricted isometry property and its implications for compressed
  sensing.
\newblock {\em Compte Rendus de l'Academie des Sciences, Paris, Series I, 346},
  pages 589--59, 2008.

\bibitem{CRT}
E.~Candes, J.~Romberg, and T.~Tao.
\newblock Robust uncertainty principles: exact signal reconstruction from
  highly incomplete frequency information.
\newblock {\em IEEE Trans. on Information Theory}, 52:489--509, December 2006.

\bibitem{CWBreweighted}
E.~Candes, M.~Wakin, and S.~Boyd.
\newblock Enhancing sparsity by reweighted l1 minimization.
\newblock {\em J. Fourier Anal. Appl.}, 14:877--905, 2008.

\bibitem{SChretien08}
S.~Chretien.
\newblock An alternating ell-1 approach to the compressed sensing problem.
\newblock 2008.
\newblock available online at http://www.dsp.ece.rice.edu/cs/.

\bibitem{CoMu05}
G.~Cormode and S.~Muthukrishnan.
\newblock Combinatorial algorithms for compressed sensing.
\newblock {\em SIROCCO, 13th Colloquium on Structural Information and
  Communication Complexity}, pages 280--294, 2006.

\bibitem{DaiMil08}
W.~Dai and O.~Milenkovic.
\newblock Subspace pursuit for compressive sensing signal reconstruction.
\newblock {\em Preprint}, page available at arXiv:0803.0811, March 2008.

\bibitem{DonohoUnsigned}
D.~Donoho.
\newblock Neighborly polytopes and sparse solutions of underdetermined linear
  equations.
\newblock 2004.
\newblock Technical report, Department of Statistics, Stanford University.

\bibitem{DonohoPol}
D.~Donoho.
\newblock High-dimensional centrally symmetric polytopes with neighborlines
  proportional to dimension.
\newblock {\em Disc. Comput. Geometry}, 35(4):617--652, 2006.

\bibitem{DonMalMon09}
D.~Donoho, A.~Maleki, and A.~Montanari.
\newblock Message-passing algorithms for compressed sensing.
\newblock {\em Proc. National Academy of Sciences}, 106(45):18914--18919, Nov.
  2009.

\bibitem{DT}
D.~Donoho and J.~Tanner.
\newblock Neighborliness of randomly-projected simplices in high dimensions.
\newblock {\em Proc. National Academy of Sciences}, 102(27):9452--9457, 2005.

\bibitem{DonohoSigned}
D.~Donoho and J.~Tanner.
\newblock Sparse nonnegative solutions of underdetermined linear equations by
  linear programming.
\newblock {\em Proc. National Academy of Sciences}, 102(27):9446--9451, 2005.

\bibitem{DTciss}
D.~Donoho and J.~Tanner.
\newblock Thresholds for the recovery of sparse solutions via $l_1$
  minimization.
\newblock {\em Proc. Conf. on Information Sciences and Systems}, March 2006.

\bibitem{DTbern}
D.~Donoho and J.~Tanner.
\newblock Counting the face of randomly projected hypercubes and orthants with
  application.
\newblock 2008.
\newblock available online at http://www.dsp.ece.rice.edu/cs/.

\bibitem{DTjams2010}
D.~Donoho and J.~Tanner.
\newblock Counting faces of randomly projected polytopes when the projection
  radically lowers dimension.
\newblock {\em J. Amer. Math. Soc.}, 22:1--53, 2009.

\bibitem{DOnoho06CS}
D.~L. Donoho.
\newblock Compressed sensing.
\newblock {\em IEEE Trans. on Information Theory}, 52(4):1289--1306, 2006.

\bibitem{DH01}
D.~L. Donoho and X.~Huo.
\newblock Uncertainty principles and ideal atomic decompositions.
\newblock {\em IEEE Trans. Inform. Theory}, 47(7):2845--2862, November 2001.

\bibitem{DTDSomp}
D.~L. Donoho, Y.~Tsaig, I.~Drori, and J.L. Starck.
\newblock Sparse solution of underdetermined linear equations by stagewise
  orthogonal matching pursuit.
\newblock {\em 2007}.
\newblock available online at http://www.dsp.ece.rice.edu/cs/.

\bibitem{FN}
A.~Feuer and A.~Nemirovski.
\newblock On sparse representation in pairs of bases.
\newblock {\em IEEE Trans. on Information Theory}, 49:1579--1581, June 2003.

\bibitem{FL08}
S.~Foucart and M.~J. Lai.
\newblock Sparsest solutions of underdetermined linear systems via ell-q
  minimization for $0 < q \leq 1$.
\newblock available online at http://www.dsp.ece.rice.edu/cs/.

\bibitem{GiStTrVe06}
A.~Gilbert, M.~J. Strauss, J.~A. Tropp, and R.~Vershynin.
\newblock Algorithmic linear dimension reduction in the l1 norm for sparse
  vectors.
\newblock {\em 44th Annual Allerton Conference on Communication, Control, and
  Computing}, 2006.

\bibitem{GiStTrVe07}
A.~Gilbert, M.~J. Strauss, J.~A. Tropp, and R.~Vershynin.
\newblock One sketch for all: fast algorithms for compressed sensing.
\newblock {\em ACM STOC}, pages 237--246, 2007.

\bibitem{Gordon88}
Y.~Gordon.
\newblock On {M}ilman's inequality and random subspaces which escape through a
  mesh in ${R}^n$.
\newblock {\em Geometric Aspect of of functional analysis, Isr. Semin. 1986-87,
  Lect. Notes Math}, 1317, 1988.

\bibitem{GN03}
R.~Gribonval and M.~Nielsen.
\newblock Sparse representations in unions of bases.
\newblock {\em IEEE Trans. Inform. Theory}, 49(12):3320--3325, December 2003.

\bibitem{GN04}
R.~Gribonval and M.~Nielsen.
\newblock On the strong uniqueness of highly sparse expansions from redundant
  dictionaries.
\newblock In {\em Proc. Int Conf. Independent Component Analysis (ICA'04)},
  LNCS. Springer-Verlag, September 2004.

\bibitem{GN07}
R.~Gribonval and M.~Nielsen.
\newblock Highly sparse representations from dictionaries are unique and
  independent of the sparseness measure.
\newblock {\em {A}ppl. {C}omput. {H}arm. {A}nal.}, 22(3):335--355, May 2007.

\bibitem{LN}
N.~Linial and I.~Novik.
\newblock How neighborly can a centrally symmetric polytope be?
\newblock {\em Discrete and Computational Geometry}, 36:273--281, 2006.

\bibitem{PMM}
P.~McMullen.
\newblock Non-linear angle-sum relations for polyhedral cones and polytopes.
\newblock {\em Math. Proc. Cambridge Philos. Soc.}, 78(2):247--261, 1975.

\bibitem{NT08}
D.~Needell and J.~A. Tropp.
\newblock {CoSaMP}: Iterative signal recovery from incomplete and inaccurate
  samples.
\newblock {\em Applied and Computational Harmonic Analysis}, 26(3):301--321,
  2009.

\bibitem{NeVe07}
D.~Needell and R.~Vershynin.
\newblock Unifrom uncertainly principles and signal recovery via regularized
  orthogonal matching pursuit.
\newblock {\em Foundations of Computational Mathematics}, 9(3):317--334, 2009.

\bibitem{Ruben}
H.~Ruben.
\newblock On the geometrical moments of skew regular simplices in
  hyperspherical space; with some applications in geometry and mathematical
  statistics.
\newblock {\em Acta. Math. (Uppsala)}, 103:1--23, 1960.

\bibitem{Ver}
M.~Rudelson and R.~Vershynin.
\newblock Geometric approach to error correcting codes and reconstruction of
  signals.
\newblock {\em International Mathematical Research Notices}, 64:4019 -- 4041,
  2005.

\bibitem{RVmesh}
M.~Rudelson and R.~Vershynin.
\newblock On sparse reconstruction from {F}ourier and {G}aussian measurements.
\newblock {\em Comm. on Pure and Applied Math.}, 61(8), 2007.

\bibitem{SaZh08}
V.~Saligrama and M.~Zhao.
\newblock Thresholded basis pursuit: Quantizing linear programming solutions
  for optimal support recovery and approximation in compressed sensing.
\newblock 2008.
\newblock available on arxiv.

\bibitem{StojnicLiftStrSec13}
M.~Stojnic.
\newblock Lifting $\ell_1$-optimization strong and sectional thresholds.
\newblock available at arXiv.

\bibitem{StojnicEquiv10}
M.~Stojnic.
\newblock A rigorous geometry-probability equivalence in characterization of
  $\ell_1$-optimization.
\newblock available at arXiv.

\bibitem{StojnicUpper10}
M.~Stojnic.
\newblock Upper-bounding $\ell_1$-optimization weak thresholds.
\newblock available at arXiv.

\bibitem{StojnicICASSP09}
M.~Stojnic.
\newblock A simple performance analysis of $\ell_1$-optimization in compressed
  sensing.
\newblock {\em ICASSP, International Conference on Acoustics, Signal and Speech
  Processing}, April 2009.

\bibitem{StojnicCSetam09}
M.~Stojnic.
\newblock Various thresholds for $\ell_1$-optimization in compressed sensing.
\newblock {\em submitted to IEEE Trans. on Information Theory}, 2009.
\newblock available at arXiv:0907.3666.

\bibitem{StojnicICASSP10knownsupp}
M.~Stojnic.
\newblock Towards improving $\ell_1$ optimization in compressed sensing.
\newblock {\em ICASSP, International Conference on Acoustics, Signal and Speech
  Processing}, March 2010.

\bibitem{SPH}
M.~Stojnic, F.~Parvaresh, and B.~Hassibi.
\newblock On the reconstruction of block-sparse signals with an optimal number
  of measurements.
\newblock {\em IEEE Trans. on Signal Processing}, August 2009.

\bibitem{JATGomp}
J.~Tropp and A.~Gilbert.
\newblock Signal recovery from random measurements via orthogonal matching
  pursuit.
\newblock {\em IEEE Trans. on Information Theory}, 53(12):4655--4666, 2007.

\bibitem{JAT}
J.~A. Tropp.
\newblock Greed is good: algorithmic results for sparse approximations.
\newblock {\em IEEE Trans. on Information Theory}, 50(10):2231--2242, 2004.

\bibitem{VS}
A.~M. Vershik and P.~V. Sporyshev.
\newblock Asymptotic behavior of the number of faces of random polyhedra and
  the neighborliness problem.
\newblock {\em Selecta Mathematica Sovietica}, 11(2), 1992.

\bibitem{XHapp}
W.~Xu and B.~Hassibi.
\newblock Compressed sensing over the grassmann manifold: A unified analytical
  framework.
\newblock 2008.
\newblock available online at http://www.dsp.ece.rice.edu/cs/.

\bibitem{Y}
Y.~Zhang.
\newblock When is missing data recoverable.
\newblock available online at http://www.dsp.ece.rice.edu/cs/.

\end{thebibliography}
\end{singlespace}

\end{document}